\let\oldtitle\title
\renewcommand{\title}[1]{\oldtitle {\scshape #1}}
\numberwithin{equation}{section}
\newtheoremstyle{plain}
  {}   
  {}   
  {\itshape}  
  {}       
  {\mdseries\scshape} 
  {.}         
  { } 
  {\thmname{#1}\thmnumber{ #2}\ifx#3\empty\else\ (#3)\fi}
\theoremstyle{plain}
\newtheorem{theorem}{\underline{Theorem}} 
\newtheorem{proposition}[theorem]{\underline{Proposition}}
\newtheorem{corollary}[theorem]{\underline{Corollary}}
\newtheoremstyle{definition}
  {}   
  {}   
  {}  
  {}       
  {\mdseries\scshape} 
  {.}         
  { } 
  {\thmname{#1}\thmnumber{ #2}\ifx#3\empty\else\ (#3)\fi}
\theoremstyle{definition}
\newtheorem{definition}[theorem]{\underline{Definition}}
\newtheorem{example}[theorem]{\underline{Example}}
\newtheorem{remark}[theorem]{\underline{Remark}}
\let\expandafter\oldproof\csname\string\proof\endcsname
\let\oldendproof\endproof
\renewenvironment{proof}[1][\proofname]{%
  \oldproof[\scshape\underline{#1}]%
}{\oldendproof}
\numberwithin{theorem}{section}
\newcommand{\<}{\left\langle}
\renewcommand{\>}{\right\rangle}
\renewcommand{\(}{\left(}
\renewcommand{\)}{\right)}
\renewcommand{\[}{\left[}
\renewcommand{\]}{\right]}
\newcommand\Eb{\mathds{E}}
\newcommand\Fb{\mathds{F}}
\newcommand\Pb{\mathds{P}}
\newcommand\Rb{\mathds{R}}
\newcommand\Ac{\mathscr{A}}
\newcommand\Fc{\mathscr{F}}
\newcommand\Hc{\mathscr{H}}
\newcommand\Lc{\mathscr{L}}
\newcommand\Nc{\mathscr{N}}
\newcommand\Pc{\mathscr{P}}
\newcommand\eps{\varepsilon}
\newcommand\om{\omega}
\newcommand\Om{\Omega}
\newcommand\sig{\sigma}
\newcommand\gam{\gamma}
\newcommand\del{\delta}
\newcommand\ab{\overline{a}}
\newcommand\bb{\overline{b}}
\newcommand\hb{\overline{h}}
\newcommand\yv{\mathbf{y}}
\newcommand\Cv{\mathbf{C}}
\newcommand\mv{\mathbf{m}}
\newcommand\av{\mathbf{a}}
\newcommand\Lch{\widehat{\Lc}}
\newcommand\Sh{\widehat{S}}
\newcommand\Gamh{\widehat{\Gamma}}
\renewcommand\d{\partial}
\newcommand\dd{\mathrm{d}}
\newcommand\ee{\mathrm{e}}
\newcommand{\nuba}[1]{\overline{\nu_{#1}^\ast}}
\newcommand{\blu}[1]{\textcolor{blue}{#1}}
\newcommand{\ora}[1]{\textbf{\textcolor[cmyk]{0,.61,.97,0}{#1}}}
\begin{document}

\title{Optimal liquidation under stochastic price impact}

\author{
Weston Barger
\thanks{Department of Applied Mathematics, University of Washington.  \textbf{e-mail}: \url{wdbarger@uw.edu}}
\and
Matthew Lorig
\thanks{Department of Applied Mathematics, University of Washington.  \textbf{e-mail}: \url{mlorig@uw.edu}}
}

\date{This version: \today}

\maketitle


%
%

\begin{abstract}
We assume a continuous-time price impact model similar to Almgren-Chriss but with the added assumption that the price impact parameters are stochastic processes modeled as correlated scalar Markov diffusions. In this setting, we develop trading strategies for a trader who desires to liquidate his inventory but faces price impact as a result of his trading. For a fixed trading horizon, we perform coefficient expansion on the Hamilton-Jacobi-Bellman equation associated with the trader's value function. The coefficient expansion yields a sequence of partial differential equations that  we solve to give closed-form approximations to the value function and optimal liquidation strategy. We examine some special cases of the optimal liquidation problem and give financial interpretations of the approximate liquidation strategies in these cases. Finally, we provide numerical examples to demonstrate the effectiveness of the approximations.  
\end{abstract}

\noindent

\section{Introduction}
\label{sec:intro}

When institutional traders execute large market orders, they are faced with transactional frictions. Direct frictions, such as exchange and brokerage fees, are known in advance and can be incorporated into a trading strategy. Traders also incur indirect costs, and such costs are, in general, unknown in advance and may be difficult to quantify even after the trading is complete. The opportunity cost that arises from waiting to execute trades and the price impact that results from trading are both examples of indirect costs. Price impact typically affects traders adversely. Selling an asset puts downward pressure on the price thereby lowering revenues while purchasing an asset pushes its price upward, resulting in higher costs. We focus on price impact costs in this paper. Specifically, we examine how a trader should optimally liquidate a large position in a market in which price impact is stochastic.

The optimal liquidation problem under price impact has been studied extensively in the literature. \cite{bertsimas1998optimal} use a linear price impact model and solve a discrete optimal control problem to minimize expected trading costs. \cite{almgren1999value, almgren2001optimal, huberman2005optimal} also use a linear price impact model but consider the variance in trading costs. \cite{almgren2003optimal} employs nonlinear impact functions and discusses the continuous-time limit of the models in \cite{almgren1999value, almgren2001optimal} in more detail. \cite{almgren2012optimal} considers optimal liquidation in a market with stochastic liquidity and stochastic volatility. \cite{NBERw11444} include price impact by modeling the limit order book directly (see also the published version, \cite{obizhaeva2013optimal}). \cite{alfonsi2010optimal} extend the work of \cite{NBERw11444} to allow for general limit order book shapes. \cite{cartea2016incorporating} use a continuous time linear impact model and incorporate stochastic order flow. For an overview of continuous-time price impact models, see \cite{cartea2015algorithmic} and the references therein. 

In this paper, we assume a continuous-time price impact model where the price impact parameters are stochastic. Specifically, the temporary and permanent price impact parameters are modeled as scalar Markov diffusions. We allow the temporary and permanent price impact parameters to be correlated, as empirical evidence suggests they are (see \cite{cartea2016incorporating}). In this setting, we define a trader's value function and formulate the associated Hamilton-Jacobi-Bellman (HJB) partial differential equation (PDE). We find an approximate solution of the HJB equation by applying coefficient expansion techniques that were first developed for one-dimensional linear parabolic PDEs in \cite{pagliarani2011analytical} and later extended to $d$-dimensions in \cite{lorig-pagliarani-pascucci-4} and nonlinear problems in \cite{lorig2016portfolio,lorig-4}. This, in turn, yields approximations to the associated optimal trading strategy. The resulting optimal strategy approximations are explicit and do not require numerical integration. 

The zeroth order approximation to the optimal strategy can be interpreted as an Almgren-Chriss strategy for which the price impact parameters are recalibrated in continuous time. Successive terms in higher-order approximations can therefore be viewed as corrections to the strategy of Almgren and Chriss. Higher-order strategy approximations are influenced by the geometry of the stochastic differential equation (SDE) coefficients modeling the impact parameter's dynamics, allowing traders to take advantage of periods of relatively high or low price impact. 

The rest of the paper is organized as follows. In Section \ref{sec:model}, we state our modeling assumptions, define the trader's value function, and provide the associated HJB equation. In Section \ref{sec:asymptotics}, we develop an asymptotic expansion for solutions of the HJB equation. This expansion leads to a sequence of PDEs, which we solve recursively in Section \ref{sec:pde.solution}. The solution of these PDEs allows us to construct approximations to the optimal liquidation strategy. We discuss limiting cases of the optimal strategy approximations in Section \ref{sec:limstrats}. In Section \ref{sec:simulation}, we demonstrate the effectiveness of the approximate optimal strategies by performing a Monte Carlo study. Some concluding remarks are offered in Section \ref{sec:conclusion}. 

%
%
\section{Market model and trader's value function}
\label{sec:model}

To begin, we fix a trading horizon $T > 0$ and filtered probability space $(\Om, \Fc, \Fb=(\Fc_t)_{ 0\leq t \leq T}, \Pb)$. We suppose an institutional trader holds $Q_0 > 0$ shares of a stock $S$ that he wishes to liquidate. The trader does not post limit orders but trades exclusively via market orders. He must choose the speed at which he sends market orders with the aim of liquidating all $Q_0$ shares by the end of a trading horizon $T$. We assume that the trader trades in continuous time, and we denote by $\nu = ( \nu_t )_{0\leq t \leq T}$ the rate at which the trader sends market orders (i.e., the liquidation speed). The inventory $Q^\nu = (Q_t^\nu)_{0\leq t \leq T}$ depends on the the trading strategy $\nu$ and is given by
\begin{align}
\dd Q_t^\nu 
	&= -\nu_t \dd t. \label{dynamics:Q}
\end{align}
A positive trading rate $\nu_t > 0$ at a time $t$ corresponds to selling shares of $S$, and a negative rate $\nu_t  < 0$ corresponds to buying shares of $S$. Although we shall restrict ourselves to the liquidation problem in this paper, we mention that the set up for the acquisition problem is similar. The trader wishes to choose $\nu$ such that he minimizes the indirect costs he incurs as a result of his trading. We incorporate price impact in the model by explicitly including temporary and permanent price impact parameters. 

%
%

\subsection{Permanent price impact}
\label{subsec:pimpact}

We assume that when the trader sends market orders there is a permanent impact on the midprice of the stock. Sell orders put downward pressure on the midprice of the stock, and, conversely, buy orders put upward pressure on the midprice. For example, suppose that a trader submits a large sell order for $S$, and suppose further that other traders on the market have similar signals and also post market orders to sell. Liquidity providers fill the gap in the book by posting limit orders to sell at lower prices, thus moving the midprice down. 

We model the midprice of the stock as a stochastic process $S^\nu = ( S_t^\nu)_{0 \leq t \leq T}$ with the dynamics
\begin{align}
\dd S_t^\nu
	&= -g(b_t) \nu_t\, \dd t + \sig\, \dd W_t,  \label{dynamics:S}
\end{align}
where the constant $\sig > 0$ is positive, the function $g$ is continuous and real-valued, the Markov diffusion $b = (b_t)_{0 \leq t \leq T}$ has the dynamics
\begin{align}
\dd b_t
	&= \eta(b_t)\,  \dd t + \psi(b_t)\, \dd B_t^{(1)}, \label{dynamics:b}
\end{align}
and the standard Brownian motions $W = (W_t)_{0 \leq t \leq T}$ and $B^{(1)} = (B_t^{(1)})_{0 \leq t \leq T}$ are uncorrelated.

We think of the Brownian motion $W$ as market noise due to the reshuffling of limit orders. Permanent price impact is modeled by the process $g(b)=(g(b_t))_{0 \leq t \leq T }$, the magnitude of which corresponds the severity of the impact. We require that $g(b_t) \geq 0$ for all $0 \leq t \leq T$ because a negative permanent price impact would imply that selling shares of an asset would push the midprice upwards, which is unrealistic. Asymptotic expansions are performed in Section \ref{sec:asymptotics} for general $g$, but we give the linear model $g(b) = b$ special attention in Section \ref{sec:simulation}.

%
%

\subsection{Temporary impact}
\label{subsec:timpact}

In addition to permanent price impact, the trader also faces a temporary price impact. Temporary price impact is the cost directly associated with each trade, and, unlike the permanent impact, temporary impact does not carry over into subsequent trades. Temporary impact can be understood as follows: the number of shares available at the best bid is limited, and if the trader's market order is large enough then the trader walks the book (i.e., depletes the outstanding limit orders nearest the midprice). We include temporary price impact in the model by defining the execution price $\Sh^\nu = (\Sh_t^\nu)_{0 \leq t \leq T}$ of the asset to be
\begin{align}
\Sh_t^\nu
	&= S_t^\nu -  f(a_t)\nu_t, \label{dynamics:Shat}
\end{align}
where the function $f$ is continuous and real-valued and the Markov diffusion $a = (a_t)_{0 \leq t \leq T}$ has the dynamics
\begin{align}
\dd a_t
	&= \mu(a_t)\,  \dd t + \om(a_t)\, \dd B_t^{(2)} . \label{dynamics:a}
\end{align}
Here, the standard Brownian motion $B^{(2)} = (B_t^{(2)})_{0\leq t \leq T}$ is uncorrelated with the Brownian motion $W$ that drives the midprice $\dd \< W, B^{(2)}\> = 0$ but is correlated with the Brownian motion $B^{(1)}$ that drives the permanent price impact $\dd \< B^{(1)}, B^{(2)} \> = \rho \, \dd t$ where $\rho \in [-1,1]$.  Taking the temporary price impact to be a stochastic process allows us to incorporate stochastic liquidity into our model.

We require that $f(a_t) > 0$ for all $0 \leq t \leq T$ to reflect the fact that traders are not compensated for posting market orders. Assuming the trader wishes to minimize the cost associated with temporary price impact, if the temporary price impact ever reached zero, the trader would liquidate his entire inventory immediately resulting in blowup in the optimal strategy. The magnitude of the process $f(a_t)$ describes the severity of the temporary price impact. We note that \cite{cartea2016incorporating} suggests that temporary and permanent price impact are correlated, so we allow temporary and permanent price impact to be correlated with parameter $\rho$.  Generally, temporary and permanent price impact are positively correlated, although our model does not require that. 

The above framework stipulates that the temporary impact is only felt by the trader who initiates the market order. Furthermore, the limit order book rebalances infinitely fast to the state before the arrival of the market order. This assumption is known as order book resilience. See \cite{alfonsi2010optimal}, \cite{almgren2003optimal}, \cite{gatheral2012transient}, \cite{kharroubi2010optimal}, and \cite{schied2013robust} for further study and relaxations of the resilience assumption. Asymptotic expansions are performed in Section \ref{sec:asymptotics} for general $f$, but we give special attention to the linear case $f(a) = a$ in Section \ref{sec:simulation}. 

In the framework described above, one easily derives that the trader's cash position $X^\nu = (X_t^\nu)_{0 \leq t \leq T}$ is given by 
\begin{align}
\dd X_t^\nu	
	&= \nu_t \Sh_t^\nu \dd t
	= \nu_t\( S_t^\nu - f(a_t)\nu_t \) \dd t. \label{dynamics:X}
\end{align}

%
%

\subsection{Trader's value function}
\label{sec:problem}

We consider a trader who wishes to liquidate $Q_0$ shares of $S^\nu$ under the model described in Section \ref{sec:model}. We assume that the trader wishes to maximize his expected cash at the terminal time $T$ subject to penalties for holding inventory. For a given trading strategy $\nu$, we define the trader's performance criteria $H^\nu$ to be
\begin{align}
H^\nu(t,x,s,q,a,b)
	:= \Eb_{t,x,s,q,a,b}\[ X_T^\nu + Q_T^\nu \(S_T^\nu - \kappa Q_T^\nu\) - \varphi \int_t^T \dd s ( Q_s^\nu)^2 \], \label{eq:perf.criteria}
\end{align}
where the constants $\kappa > 0$ and $\varphi > 0$ are positive and $\Eb_{t,x,s,q,a,b}$ is shorthand for expectation conditioned on $(X_t^\nu, S_t^\nu , Q_t^\nu, a_t, b_t) = (x,s,q,a,b)$. From left to right, the following three terms are present in the trader's performance criteria \eqref{eq:perf.criteria}: terminal cash, the proceeds of liquidating the remaining shares at the terminal time $T$, and an integral term penalizing the holding of inventory. The proceeds from liquidation at time $T$ are subject to temporary price impact, which is incorporated through the parameter $\kappa$. The third term $\varphi \int_t^T \dd s ( Q_s^\nu)^2$ imposes a running penalty for holding inventory. When $\varphi$ is large, optimal strategies will trade quickly at the beginning of the trading horizon rather than face holding large inventories. In \cite{cartea2014algorithmic}, the authors show that including the inventory penalty term is equivalent to the trader considering alternate models with stochastic drifts but penalizing models that are far from the reference model in the sense of relative entropy. In that context, larger values of $\varphi$ correspond to an trader who is less confident about the drift of the $S^\nu$. The authors of \cite{cartea2015risk} introduce the inventory penalty term heuristically and justify it by showing that it is proportional to the variance of the book value of the inventory over the trading horizon. 

The trader's value function is given by
\begin{align}
H(t,x,s,q,a,b)
	&= \sup_{\nu \in \mathcal{A}} H^\nu(t,x,s,q,a,b), \label{eq:value.function}
\end{align}
where $\mathcal A= \{ \nu \mid \nu \text{ is } \Fc_t \text{ adapted and } \int_t^T\dd s |\nu_s| < \infty, \Pb - \text{a.s.}\}$ is the set of admissible strategies.

%
%

\subsection{The Hamilton-Jacobi-Bellman equation}
\label{sec:dynamic.programming}

In this section, we give the HJB equation associated with the value function $H$. Let $\Hc^\nu$ denote the infinitesimal generator for the process $(X^\nu, S^\nu, Q^\nu, a,b)$ with $\nu$ fixed. Explicitly,
\begin{align}
\Hc^\nu
	&= \frac 12 \sig^2 \d_s^2 - g(b) \nu \d_s - \nu \d_q + \nu\( s - f(a)\nu\)  \d_x   \\ & \qquad
	  + \frac 12 \om^2(a)  \d_a^2 + \rho\, \om(a) \psi(b) \d_a\d_b + \frac 12 \psi^2(b)  \d_b^2 + \mu(a) \d_a + \eta(b)  \d_b.\label{eq:inf.gen}
\end{align}
As we shall see in later in this section, it is convenient to write $\Hc^\nu$ as the sum of two operators
\begin{align}
\Hc^{\nu}
	&= \Ac^\nu + \Lc, &
\Ac^{\nu}
	&:= \frac 12 \sig^2 \d_s^2 - g(b) \nu \d_s - \nu \d_q + \nu\( s - f(a) \nu \) \d_x, \label{eq:Anu} \\
& & \Lc
	&:= \frac 12 \om^2(a)\d_a^2 + \rho\, \om(a) \psi(b) \d_a\d_b + \frac 12 \psi^2(b)\d_b^2 + \mu(a) \d_a + \eta(b) \d_b. \label{eq:L}
\end{align}
The operator $\Lc$ is the infinitesimal generator of the process $(a,b)$ and the operator $\Ac^\nu$ is the infinitesimal generator of $(X^\nu, S^\nu, Q^\nu)$ with $(\nu,a,b)$ fixed. When the process $(a,b)$ is constant (i.e., $(\mu,\eta, \om, \psi) = (0,0,0,0)$), we have $\Hc^\nu = \Ac^\nu$, and the model reduces to the continuous-time Almgren-Chriss model. 

The HJB equation associated with the trader's value function $H$ is  
\begin{align}
\(\d_t + \Lc\) H + \sup_{\nu} \( \Ac^\nu H - \varphi q^2\)
	&= 0, &
H(T,x,s,q,\cdot,\cdot)
	= x + q\(s - \kappa q\), \label{pde:hjb}
\end{align}
where $\Ac^\nu$ and $\Lc$ are given by \eqref{eq:Anu} and \eqref{eq:L}, respectively. We assume that \eqref{pde:hjb} admits a unique classical solution which coincides with the trader's value function (see \cite{pham2009continuous}). 

Following \cite{cartea2016incorporating}, we make the following ansatz
\begin{align}
H(t,x,s,q,a,b) 
	&= x + q s + q^2 h(t,a,b), \label{eq:ansatz1}
\end{align}
for some function $h$ to be determined. We refer to $h$ as the \textit{transformed value function}. Inserting \eqref{eq:ansatz1} into \eqref{pde:hjb} yields the following PDE problem for $h$:
\begin{align}
q^2(\d_t + \Lc)h + \sup_{\nu} \(-q (2 \nu  h +\nu  g +q \varphi )-\nu ^2 f\)
	&= 0, &
h(T,\cdot,\cdot)
	&= -\kappa . \label{pde:hjb.lite}
\end{align}
The optimal strategy $\nu^\ast$, obtained by maximizing the supremum in \eqref{pde:hjb.lite}, is given in feedback form as
\begin{align}
\nu^\ast(t,q,a,b)
	&=- \( \frac{ g(b) + 2 h(t,a,b)}{2 f(a)}\)q. \label{eq:nu} 
\end{align}
Inserting \eqref{eq:nu} into \eqref{pde:hjb.lite} we obtain
\begin{align}
0
	&= (\d_t + \Lc) h + \Nc(h) - \varphi, &
h(T,\cdot,\cdot,)
	&= - \kappa , \label{pde:h} \\
\Nc(h)
	&= \frac{1}{f} h^2 + \frac{g}{f} h + \frac{g^2}{4 f}. \label{eq:N}
\end{align}
Note that we have reduced the HJB equation \eqref{pde:hjb} to a PDE that involves only three variables: $(t,a,b)$.

%
%

\section{Asymptotics}
\label{sec:asymptotics}

For general $(f,g,\om,\mu,\psi,\eta,\rho)$, there is no closed-form solution to \eqref{pde:h}. In this section, we develop a formal asymptotic expansion for the transformed value function $h$ and the corresponding optimal execution strategy $\nu^\ast$ by performing polynomial expansions on the coefficients of \eqref{pde:h}. The authors of \cite{lorig-pagliarani-pascucci-4} use this approach for the European option pricing problem in a general local-stochastic volatility setting. One key difference here is that, unlike classical option pricing PDEs, which are linear, the PDE \eqref{pde:h} is nonlinear. Our approach is similar to that of \cite{lorig2016portfolio} and \cite{lorig-4}, who apply the polynomial coefficient expansion method to the Merton problem and indifference pricing problem, both of which are nonlinear.

%
%

\subsection{Coefficient Taylor series expansions}
\label{sec:pde.asymptotics}

For the sake of simplicity, we assume in the following formal computations that the coefficients of \eqref{pde:h} are analytic. We shall see later that the $N$th-order approximation we obtain for $h$ and $\nu^\ast$ require only that the coefficients of \eqref{pde:h} belong to $C^N(D)$ where $D$ is some open set in $\Rb^2$.

Let $\chi$ be a placeholder for any of the coefficients appearing in PDE \eqref{pde:h}
\begin{align}
\chi   
    \in \{ \tfrac 12\om^2 ,\rho\om\psi,\tfrac 12 \psi^2,\mu,\eta,f^{-1},f^{-1} g,4^{-1} f^{-1} g^2\}, \label{eq:chi}
\end{align}
and fix a point $(\ab, \bb) \in \Rb^2$.  For any $\eps \in [0,1]$, we define
\begin{align}
\chi^\eps (a,b)
    &:= \chi(\ab + \eps( a - \ab), \bb + \eps (b - \bb)). \label{eq:chi.eps}
\end{align}
Formally, Taylor expanding $\chi^\eps$ in $\eps$ about the point $\eps = 0$ gives
\begin{align}
\chi^\eps(a,b)
	&:= \sum_{n=0}^\infty \eps^n \chi_n(a,b), &
\eps 
	&\in  [0,1], \label{eq:chi.expand}\\
\chi_n(a,b)
	&:= \sum_{k=0}^n \chi_{n-k,k} \cdot (a - \ab)^{n-k} (b - \bb)^k, &
\chi_{n-k,k}
	&:= \frac{1}{(n-k)! k!} \d_a^{n-k} \d_b^{k} \chi (\ab,\bb). \label{eq:chi.n}
\end{align} 
In particular, evaluating \eqref{eq:chi.expand} at $\eps = 1$ yields the Taylor series expansion of $\chi$ about the point $(\ab,\bb)$. Consider now the family of PDEs indexed by $\eps$:
\begin{align}
( \d_t + \Lc^\eps) h^\eps + \Nc^\eps(h^\eps) - \varphi
	&= 0, &
h^\eps(T,\cdot,\cdot,\cdot)
	&= - \kappa, &
\eps
	&\in [0,1], \label{pde:h.eps}
\end{align}
where $\Lc^\eps$ and $\Nc^\eps(\cdot)$ are the operators obtained by replacing the coefficients of $\Lc$ and $\Nc(\cdot)$ in \eqref{eq:L} and \eqref{eq:N}, respectively, with their $\eps$-counterparts. Explicitly, we make the replacements
\begin{align}
\{\tfrac 12\om^2 ,\rho\om\psi,\tfrac 12\psi^2,\mu,\eta,f^{-1},f^{-1} g,4^{-1} f^{-1} g^2 \} 
	\mapsto \{(\tfrac 12\om^2)^\eps ,(\rho\om\psi)^\eps,(\tfrac 12\psi^2)^\eps,\mu^\eps,\eta^\eps,(f^{-1})^\eps,(f^{-1} g)^\eps,(4^{-1} f^{-1} g^2)^\eps\}
	\label{eq:epsmapping}
\end{align}
in \eqref{pde:h} to obtain \eqref{pde:h.eps}. Using \eqref{eq:chi.expand}, the linear operator $\Lc^\eps$ in the PDE \eqref{pde:h.eps} can be written as
\begin{align}
\Lc^\eps
	= \sum_{n=0}^\infty \eps^n \Lc_n, \label{eq:L.eps}
\end{align}
where we have defined
\begin{align}
\Lc_n
	:= (\tfrac 12 \om^2)_n \d_a^2 + (\rho \om \psi)_n \d_{ab}^2 + (\tfrac 12 \psi^2)_n \d_b^2 + \mu_n \d_a + \eta_n \d_b, \label{eq:Ln}
\end{align}
and the subscript notation $\chi_n$ is as described in \eqref{eq:chi.n}. The expansion of the nonlinear operator $\Nc^\eps$ is more involved, and we handle it below. 

We construct an expansion for the function $h^\eps$, the solution to the PDE \eqref{pde:h.eps}, as the power series in $\eps$
\begin{align}
h^\eps (t,a,b)
	&= \sum_{n=0}^\infty \eps^n h_n(t,a,b), &
\eps
	&\in [0,1]. \label{eq:h.eps}
\end{align}
Here, the sequence of functions $(h_n)_{n=0}^\infty$ are not polynomials in $(a,b)$ but rather functions to be determined which, in particular, are independent of $\eps$. We shall eventually construct the asymptotic approximation to the transformed value function $h$ by truncating \eqref{eq:h.eps} for some $n = N$ and setting $\eps = 1$. 

We insert \eqref{eq:L.eps} and \eqref{eq:h.eps} into \eqref{pde:h.eps}, expand the terms in $\Nc^\eps(h^\eps)$ in powers of $\eps$, and collect terms of like order in $\eps$. As the equality in \eqref{pde:h.eps} holds for all for all $\eps \in [0,1]$, we obtain the following sequence of PDEs:
\begin{align}
&O(\eps^0): &
0
	&= \( \d_t + \Lc_0\) h_0 + (f^{-1})_0 h_0^2 + (f^{-1} g)_0 h_0 + ( 4^{-1} f^{-1} g^2)_0 -\varphi, &
h_0(T,\cdot,\cdot)
	&=- \kappa, \label{pde:h0}\\
&O(\eps^n): &
0
	&= \( \d_t + \Lch_0 \) h_n + F_n , &
h_n(T,\cdot,\cdot)
	&= 0, \label{pde:hn}
\end{align}
where we have defined the differential operator
\begin{align}
\Lch_0
	:= \Lc_0 + 2 (f^{-1})_0 h_0 + (f^{-1}g)_0, \label{def:Lch}
\end{align}
and the functions
\begin{align}
F_n
	&:= \sum_{i=0}^{n-1} \Lc_{n-i} h_i + \sum_{i=0}^{n-1} \sum_{j=0}^i (f^{-1})_{n-i} h_{i-j} h_j + f_0 \sum_{i=1}^{n-1} h_{n-i}h_i
		+ \sum_{i=0}^{n-1} (f^{-1} g)_{n-i} h_i + (4^{-1} f^{-1} g^2)_n. \label{eq:Fn}
\end{align}
Equation \eqref{pde:hn} holds for all $n \geq 1$. We are now in position to define the $N$th order approximation for $h$.

\begin{definition}
Let $N$ be a non-negative integer, and assume that the coefficients of $\Lc$ and $\Nc$ are $C^N(D)$ where $D$ is an open set in $\Rb^2$. 
For any $(a,b) \in D$, we define the $N$th-order approximation of the transformed value function $h$ by
\begin{align}
\hb_N(t,a,b)
	&:= \sum_{n=0}^N \eps^n h_n(t,a,b) \Big|_{\(\eps, \ab,\bb\) = (1,a,b)}, \label{def:h.bar}
\end{align}
where $h_0$ is the solution to \eqref{pde:h0} and $h_n$, for $n \geq 1$, is the solution to \eqref{pde:hn}. 
\end{definition}

We now focus on developing an $N$th order approximation for the optimal liquidation strategy $\nu^\ast$.  To this end, recalling the expression \eqref{eq:nu} for the optimal execution strategy $\nu^\ast$ we define 
\begin{align}
\(\nu^\ast\)^\eps (\cdot,q,\cdot,\cdot)
	&:=  -\( \frac{ g^\eps + 2 h^\eps}{2 f^\eps}\)q, & 
\eps 
	&\in \[0,1\], \label{eq:nu.eps}
\end{align}
where $f^\eps$ and $g^\eps$ are given by \eqref{eq:chi.eps} and $h^\eps$ is the solution to \eqref{pde:h.eps}. 

\begin{definition}
Let $N$ be a non-negative integer, and assume that the coefficients of \eqref{eq:nu} are $C^N(D)$ where $D$ is an open set in $\Rb^2$. 
For any $(a,b) \in D$, we define the define the $N$th-order approximation of the optimal control $\nu^\ast$ as
\begin{align}
\nuba{N}(\cdot,\cdot,a,b)
	&:= \sum_{n=0}^N \eps^n \nu_n^\ast(\cdot,\cdot,a,b) \Big|_{\(\eps, \ab,\bb\) = (1,a,b)}, \label{def:nu.bar}
\end{align}
where, for every $n$, the function $\nu_n^\ast$ is the $n$th-order coefficient in the Taylor series expansion of $(\nu^\ast)^\eps$ about $\eps = 0$.
\end{definition}

\begin{remark}
As we noted at the beginning of Section \ref{sec:pde.asymptotics}, for a given $N$, the analyticity of the coefficients of $\Lc$ and $\Nc$ is not required to construct the approximation $\hb_N$. Indeed, to construct the $N$-th order approximations $\hb_N$ and $\nuba{N}$ one only needs that the coefficients are $C^N(D)$ for some $D \subseteq \Rb^2$.
\end{remark}

\begin{remark}
Observe that we have set $\eps =1$ in \eqref{def:h.bar} and \eqref{def:nu.bar}. As a result, this parameter plays no role in the approximations $\hb_N$ and $\nuba{N}$
(as it should not, as $\eps$ does not appear in the dynamics of $(X,S,Q,a,b)$ nor in the performance criteria $H^\nu$). Indeed, $\eps$ was introduced merely as an accounting tool in the formal asymptotic expansion performed above.
\end{remark}

\begin{remark}
Note that we have set $(\ab, \bb) = (a,b)$ in both \eqref{def:h.bar} and \eqref{def:nu.bar}. This is often a point of confusion, and we wish to make it clear how this is handled. First, we solve the sequence of PDE problems \eqref{pde:hn} with $(\ab,\bb)$ fixed. Let us make explicit the dependence of the solution of the $O(\eps^n)$ problem \eqref{pde:hn} on $(\ab,\bb)$ by writing $h_n^{(\ab,\bb)}(t,a,b)$. When we wish to compute the approximate value of $h$ at a point $(a,b)$, we evaluate $h_n^{(\ab,\bb)}(t,a,b)|_{(\ab,\bb) = (a,b)}$ for each $h_n$ appearing in the sum \eqref{def:h.bar}. Similarly, we can make explicit the dependence on $(\ab,\bb)$ of the $n$-th order component of $(\nu^\ast)^\eps$ by writing $(\nu_n^\ast)^{(\ab,\bb)}$. To compute the approximation for $\nu^\ast$ at a point $(a,b)$ we evaluate $(\nu_n^\ast)^{(\ab,\bb)}|_{(\ab,\bb) = (a,b)}$ for each term in the series \eqref{def:nu.bar}. The reason for choosing $(\ab,\bb) = (a,b)$ is as follows. The small-time behavior of a diffusion is predominantly determined by the geometry of the diffusion coefficients near the starting point of the diffusion $(a,b)$. In turn, the most accurate Taylor series expansion of any function near the point $(a,b)$ is the Taylor series centered at $(\ab,\bb) = (a,b)$. 
\end{remark}

We now give a representation of the approximate strategy $\nuba N$ \eqref{def:nu.bar} in terms of the functions $(h_n)$, which are solutions of the sequence of PDEs  \eqref{pde:hn}. 
\begin{proposition}
Fix $N \geq 0$, and suppose the coefficients functions appearing in \eqref{eq:chi} are $C^N(D)$ for some open set $D \subset \Rb^2$. Then for any $(a,b) \in D$, the approximate strategy $\nuba N$ in \eqref{def:nu.bar} is given by
\begin{align}
\nuba N(t,q,a,b)
    &= - \frac 1{f(a)} \( \frac 12 g(b) +  \sum_{n=0}^N h_n(t,a,b)\Big|_{(\ab,\bb) = (a,b)} \) q , 
    \label{eq:nu.bar}
\end{align}
where the $h_0$ is the solution to the PDE \eqref{pde:h0}, and $h_n$ is the solution to the PDE \eqref{pde:hn} for $n \geq 1$.
\end{proposition}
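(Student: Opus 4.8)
The plan is to expand the $\eps$-dependent strategy $(\nu^\ast)^\eps$ of \eqref{eq:nu.eps} as a power series in $\eps$, read off the coefficients $\nu_n^\ast$, and then exploit the fact that the approximation is ultimately evaluated on the diagonal $(\ab,\bb)=(a,b)$. First I would put \eqref{eq:nu.eps} in product form. Since $f>0$, the reciprocal commutes with the affine rescaling \eqref{eq:chi.eps}, so $1/f^\eps=(f^{-1})^\eps$; and because that rescaling is just precomposition with an affine map, it is multiplicative, $(\chi\xi)^\eps=\chi^\eps\xi^\eps$. Hence
\begin{align}
(\nu^\ast)^\eps
    &= -\frac q2\Big( (f^{-1}g)^\eps + 2\,(f^{-1})^\eps\,h^\eps \Big).
\end{align}
Inserting the expansions \eqref{eq:chi.expand} for $(f^{-1}g)^\eps$ and $(f^{-1})^\eps$ together with \eqref{eq:h.eps} for $h^\eps$, and taking the Cauchy product of the two series in the last term, the coefficient of $\eps^n$ is
\begin{align}
\nu_n^\ast
    &= -\frac q2\Big( (f^{-1}g)_n + 2\sum_{i=0}^{n} (f^{-1})_{i}\, h_{n-i} \Big),
\end{align}
with the $\chi_n$ as in \eqref{eq:chi.n} and $h_n$ the solutions of \eqref{pde:h0}--\eqref{pde:hn}.

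The crux is the diagonal evaluation. I would record the elementary observation that for any coefficient $\chi$ and any $\eps$ one has $\chi^\eps(a,b)\big|_{(\ab,\bb)=(a,b)}=\chi(a,b)$, since the argument $\ab+\eps(a-\ab)$ collapses to $a$ when $\ab=a$ (and likewise in $b$). Consequently, from \eqref{eq:chi.n}, the Taylor coefficients satisfy $\chi_0|_{(\ab,\bb)=(a,b)}=\chi(a,b)$ and $\chi_n|_{(\ab,\bb)=(a,b)}=0$ for every $n\ge 1$, because each monomial in $\chi_n$ carries a factor $(a-\ab)$ or $(b-\bb)$ that vanishes on the diagonal. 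The point I would stress is that this vanishing applies only to the polynomial coefficients $\chi_n$; the functions $h_n$, being solutions of \eqref{pde:h0}--\eqref{pde:hn} rather than Taylor monomials, do \emph{not} vanish on the diagonal. This asymmetry is exactly what the preceding remark flags, and it is where care is required.

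Applying this to the coefficient formula, every $(f^{-1}g)_n$ with $n\ge1$ vanishes while $(f^{-1}g)_0|_{(\ab,\bb)=(a,b)}=g(b)/f(a)$, and in the convolution $\sum_{i=0}^n (f^{-1})_i h_{n-i}$ only the $i=0$ term survives, leaving $(f^{-1})_0|_{(\ab,\bb)=(a,b)}\,h_n=h_n/f(a)$. Thus $\nu_0^\ast|_{(\ab,\bb)=(a,b)}=-\tfrac q2\big(g(b)/f(a)+2h_0/f(a)\big)$ and $\nu_n^\ast|_{(\ab,\bb)=(a,b)}=-\tfrac q{f(a)}\,h_n$ for $n\ge1$. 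Summing over $0\le n\le N$ as in \eqref{def:nu.bar} (setting $\eps=1$, so that the weights $\eps^n$ are inert) contributes the permanent-impact term $g(b)/\big(2f(a)\big)$ once, from $n=0$, and collects the remaining contributions into $\tfrac1{f(a)}\sum_{n=0}^N h_n$, giving exactly \eqref{eq:nu.bar}.

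On the obstacle: there is no hard analytic estimate here—the argument is purely algebraic, manipulating formal $\eps$-series whose coefficients exist under the stated $C^N(D)$ hypothesis. The only genuinely delicate point is the bookkeeping of the diagonal evaluation: keeping track of the fact that the rescaled coefficient functions collapse to their leading term on $(\ab,\bb)=(a,b)$ whereas the $h_n$ do not, so that the Cauchy product degenerates into the single clean sum $\sum_n h_n/f$. To avoid conflating the two roles played by $(a,b)$, I would carry the dependence $h_n^{(\ab,\bb)}$ explicitly through the computation and evaluate at $(\ab,\bb)=(a,b)$ only at the very end.
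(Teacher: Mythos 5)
Your proposal is correct and follows essentially the same route as the paper: expand $(\nu^\ast)^\eps$ in powers of $\eps$ to get the Cauchy-product formula for $\nu_n^\ast$, then use the fact that the Taylor coefficients $\chi_k$ vanish on the diagonal $(\ab,\bb)=(a,b)$ for $k\geq 1$ so that only the $(f^{-1})_0 h_n$ term and (for $n=0$) the $(f^{-1}g)_0$ term survive. Your extra remarks on the multiplicativity of the $\eps$-rescaling and on the asymmetry between the polynomial coefficients $\chi_n$ and the non-polynomial solutions $h_n$ are correct clarifications of steps the paper leaves implicit.
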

\begin{proof}
Suppose that $1 \leq k \leq N$. By \eqref{eq:chi.n}, we have that 
\begin{align}
\chi_k \Big|_{(\ab,\bb) = (a,b)}
	&= 0.
\end{align}
Thus, for $0 \leq n \leq N$,
\begin{align}
\nu_n^\ast(t,q,a,b) \Big|_{(\ab,\bb) = \(a,b\)}
	&= - \( \frac 12  (f^{-1} g)_n + \sum_{i=0}^n ( f^{-1})_{n-i} h_i(t,a,b) \)q \Bigg|_{(\ab,\bb) = \(a,b\)} \\
	&= - \frac 1{f(a)} \(\frac 12 g(b) \mathds{1}_{\{n=0\}} + h_n(t,a,b) \Big|_{\(\ab,\bb\) = (a,b)} \)q, \label{eq:nuast.eval} \\
\end{align}
where $\mathds{1}$ is the indicator function. By inserting \eqref{eq:nuast.eval} into \eqref{def:nu.bar} and evaluating at $\eps = 1$, we arrive at \eqref{eq:nu.bar}. 
\end{proof}

%
%

\subsection{Expressions for \texorpdfstring{$h_n$}{hn}}
\label{sec:pde.solution}

We begin this section by solving \eqref{pde:h0} explicitly for $h_0$, which yields explicit representation of the operator $\Lch_0$. We then give a recursive, integral expression for $h_n$ and evaluate the integral explicitly for $h_1$. We use the expressions for $h_0$ and $h_1$ to construct $\nuba 0$ and $\nuba 1$. For readability, we opt not to give $h_2$ or higher order approximations to the transformed value function. However, while tedious to obtain, their explicit computation is straightforward. 

\begin{proposition}
\label{prop:h0}
The solution $h_0$ to \eqref{pde:h0} is
\begin{align}
h_0(t)
	&=- \frac 12 g_0 + \sqrt{ \varphi f_0}\, \theta_0(t), &
\theta_0(t)
	&=\frac{ 1 + \zeta \ee^{ 2 \gam (T-t)}}{1 - \zeta \ee^{2\gam (T-t)}}, \label{eq:h0}
\end{align}
where we have defined the constants
\begin{align}
\gam 
	&:= \sqrt{ \frac{\varphi}{f_0}}, &
\zeta
	&:= \frac{ \kappa - \tfrac 12 g_0 + \sqrt{ \varphi f_0 }}{\kappa - \tfrac 12 g_0 - \sqrt{ \varphi f_0}}. \label{def:gam.zeta}
\end{align}

\end{proposition}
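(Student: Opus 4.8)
The plan is to exploit that the $O(\eps^0)$ equation \eqref{pde:h0} has \emph{spatially constant} coefficients, so that it collapses to a scalar ODE in $t$. By \eqref{eq:chi.n} every zeroth-order coefficient is frozen at the base point: $(f^{-1})_0 = 1/f(\ab)$, and likewise $(f^{-1}g)_0$ and $(4^{-1}f^{-1}g^2)_0$ are constants, while each term of $\Lc_0$ in \eqref{eq:Ln} carries a derivative $\d_a$ or $\d_b$. Writing $f_0 := f(\ab)$ and $g_0 := g(\bb)$, and noting that the terminal datum $-\kappa$ is constant, I would seek a solution of the form $h_0 = h_0(t)$. For such a function $\Lc_0 h_0 = 0$, so \eqref{pde:h0} reduces to
\begin{equation*}
h_0'(t) + \tfrac{1}{f_0} h_0^2 + \tfrac{g_0}{f_0} h_0 + \tfrac{g_0^2}{4 f_0} - \varphi = 0, \qquad h_0(T) = -\kappa.
\end{equation*}

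Next I would complete the square, using $h_0^2 + g_0 h_0 + \tfrac14 g_0^2 = (h_0 + \tfrac12 g_0)^2$, and substitute $u := h_0 + \tfrac12 g_0$ to obtain the autonomous Riccati equation $u' = \varphi - u^2/f_0$ with terminal value $u(T) = \tfrac12 g_0 - \kappa$. Recognizing the equilibria $\pm\sqrt{\varphi f_0}$ and setting $\gam = \sqrt{\varphi/f_0}$, I would write $u = \sqrt{\varphi f_0}\,\theta_0$; since then $u^2/f_0 = \varphi\,\theta_0^2$, the Riccati equation becomes the logistic ODE $\theta_0' = \gam(1-\theta_0^2)$. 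Its general solution is the Möbius-in-exponential expression $\theta_0(t) = (1+\zeta\,\ee^{2\gam(T-t)})/(1-\zeta\,\ee^{2\gam(T-t)})$ for a constant $\zeta$, which one can either quote or obtain by separating variables and integrating $\dd u/(\varphi - u^2/f_0)$ via partial fractions.

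Finally I would pin down $\zeta$ from the terminal condition. Evaluating at $t=T$ gives $\theta_0(T) = (1+\zeta)/(1-\zeta)$, and imposing $\sqrt{\varphi f_0}\,(1+\zeta)/(1-\zeta) = \tfrac12 g_0 - \kappa$ is a linear equation in $\zeta$ whose solution is precisely the expression for $\zeta$ in \eqref{def:gam.zeta}; unwinding the substitutions $u \mapsto h_0$ then recovers \eqref{eq:h0}.

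The individual computations are elementary; the only point that genuinely needs justification—the "obstacle," such as it is—is that the spatially constant ansatz delivers \emph{the} solution rather than merely \emph{a} solution. This rests on uniqueness of the classical solution to \eqref{pde:h0}, which is inherited from the standing well-posedness hypothesis for the HJB equation \eqref{pde:hjb}; equivalently, one verifies by direct substitution that \eqref{eq:h0} solves \eqref{pde:h0} and invokes uniqueness at the end. A secondary check is that the denominator $1 - \zeta\,\ee^{2\gam(T-t)}$ does not vanish on $[0,T]$, so that $\theta_0$ stays finite; this follows from the sign constraints $\kappa>0$, $\varphi>0$, $f_0>0$, and $g_0\geq 0$, which control the magnitude of $\zeta$.
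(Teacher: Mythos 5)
Your proof is correct and follows essentially the same route as the paper: both reduce \eqref{pde:h0} to the constant-coefficient Riccati ODE $h_0' + f_0^{-1}h_0^2 + f_0^{-1}g_0 h_0 + \tfrac{g_0^2}{4f_0} - \varphi = 0$ by noting that the forcing term and terminal datum are independent of $(a,b)$, the only difference being that you actually derive \eqref{eq:h0} by completing the square and separating variables, whereas the paper simply verifies it by direct substitution. (The one over-claim is your closing remark that the stated sign constraints guarantee $1-\zeta\ee^{2\gam(T-t)}\neq 0$: if $\tfrac12 g_0 > \kappa + \sqrt{\varphi f_0}$ then $\zeta\in(0,1)$ and the denominator can vanish for $T-t$ large enough, though the paper is equally silent on this point.)
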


\begin{proof}
As both the forcing term and the terminal condition in \eqref{pde:h0} are independent of $(a,b)$, we conclude that $h_0$ is a function of $t$ only. Therefore, $\Lc_0 h_0 = 0$. The PDE \eqref{pde:h0} thus reduces to the constant coefficient ODE
\begin{align}
h_0' + \frac{1}{f_0} h_0^2 + \frac{g_0}{f_0} h_0 + \frac{g_0^2}{4 f_0} - \varphi
	&= 0, &
h_0(T)
	&= - \kappa. \label{ode:h0}
\end{align}
The reader will recognize \eqref{ode:h0} as a Ricatti equation. We check by direct substitution that \eqref{eq:h0} satisfies \eqref{ode:h0}. 
\end{proof}

\begin{corollary}
The operator $\Lch_0$, defined in \eqref{def:Lch}, is an elliptic operator and has the explicit representation 
\begin{align}
\Lch_0
	:=\Lc_0  + 2\gam \theta_0, \label{eq:Lh}
\end{align}
where $\Lc_0$ is defined in \eqref{eq:L}, $\theta_0$ is defined in \eqref{eq:h0}, and $\gam$ is given in \eqref{def:gam.zeta}.
\end{corollary}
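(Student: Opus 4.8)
The plan is to verify the two claims of the corollary—the explicit form and ellipticity—separately, and both reduce to short computations. I would start from the definition \eqref{def:Lch}, namely $\Lch_0 = \Lc_0 + 2(f^{-1})_0 h_0 + (f^{-1}g)_0$, and observe that by the formula \eqref{eq:chi.n} for the Taylor coefficients, the zeroth-order coefficients are simply the base-point values $(f^{-1})_0 = 1/f_0$ and $(f^{-1}g)_0 = g_0/f_0$, which are constants.

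The core step is an algebraic simplification. Substituting the explicit expression $h_0 = -\tfrac12 g_0 + \sqrt{\varphi f_0}\,\theta_0$ from Proposition \ref{prop:h0} into the zeroth-order terms gives
\begin{align}
2(f^{-1})_0 h_0 + (f^{-1}g)_0
    = \frac{2}{f_0}\left(-\tfrac12 g_0 + \sqrt{\varphi f_0}\,\theta_0\right) + \frac{g_0}{f_0}
    = \frac{2\sqrt{\varphi f_0}}{f_0}\,\theta_0,
\end{align}
where the two $g_0/f_0$ contributions cancel. The remaining coefficient simplifies using the definition $\gam = \sqrt{\varphi/f_0}$ in \eqref{def:gam.zeta}, since $\sqrt{\varphi f_0}/f_0 = \sqrt{\varphi/f_0} = \gam$, yielding exactly $2\gam\theta_0$ and hence \eqref{eq:Lh}.

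For ellipticity, I would note that the added term $2\gam\theta_0$ is a zeroth-order (multiplication) operator and therefore does not alter the principal part of $\Lc_0$. It then suffices to check that the frozen second-order coefficient matrix of $\Lc_0$,
\begin{align}
A := \begin{pmatrix} (\tfrac12\om^2)_0 & \tfrac12(\rho\om\psi)_0 \\[2pt] \tfrac12(\rho\om\psi)_0 & (\tfrac12\psi^2)_0 \end{pmatrix},
\end{align}
is positive definite. Its determinant is $\tfrac14\om_0^2\psi_0^2(1-\rho^2)$, which is strictly positive provided $|\rho| < 1$ and the diffusion coefficients $\om(\ab),\psi(\bb)$ are nonzero; together with the positive trace this gives positive-definiteness and hence ellipticity. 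I do not anticipate a genuine obstacle here: the whole argument is routine once the zeroth-order coefficient notation is unwound. The only point requiring care is the ellipticity claim, which is strictly valid only for $|\rho|<1$ (the case $|\rho|=1$ yields a degenerate-elliptic operator); I would state the nondegeneracy hypothesis explicitly.
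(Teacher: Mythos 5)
Your computation is correct and is exactly the (omitted) argument the paper intends: the corollary follows by substituting the explicit $h_0$ from Proposition \ref{prop:h0} into \eqref{def:Lch}, cancelling the $g_0/f_0$ terms, and identifying $\sqrt{\varphi f_0}/f_0 = \gam$. Your added caveat that ellipticity is nondegenerate only when $|\rho|<1$ and $\om_0,\psi_0\neq 0$ is a fair point the paper glosses over, since it permits $\rho\in[-1,1]$.
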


With an explicit expression for $h_0$ in hand, we are able to write the zeroth order approximation $\nuba 0$ to the optimal liquidation strategy $\nu^\ast$. By \eqref{def:nu.bar} and \eqref{eq:h0}, we have 
\begin{align}
\nuba 0(t,q,a,b)
	&=- \gam \frac{ 1 + \zeta \ee^{ 2 \gam (T-t)}}{1 - \zeta \ee^{2\gam (T-t)}} q \Bigg|_{(\ab,\bb) = (a,b)},
	\label{eq:nu0}
\end{align}
where $\gam$ and $\zeta$ are given in \eqref{def:gam.zeta}. The strategy \eqref{eq:nu0} has the same form as the continuous time \cite{almgren2001optimal} strategy
, which we denote by $\nu_{AC}$. The difference between the strategies is that the level of the stochastic process $(a,b)$ is an input of zeroth order approximation $\nuba 0$ while in the Almgren-Chriss strategy $\nu_{AC}$ the price impact parameters $(a,b)$ are constants. Thus, the strategy $\nuba 0$ can be viewed as an implementation of $\nu_{AC}$ in which the price impact parameters are recalibrated in continuously time. 

Before we give an expression for $h_n$, let us review Duhamel's principle. Let $\Gamh_0$ be the fundamental solution of the operator $\d_t + \Lch_0$.  That is, $\Gamh_0$ satisfies the PDE
\begin{align}
0
    &=  \(\d_t + \Lch_0 \)\Gamh_0(\cdot,\cdot,\cdot;T,\alpha,\beta) , &
\Gamh_0(T,\cdot,\cdot;T,\alpha,\beta)
    &=  \del_{\alpha,\beta}, \label{pde:Gamh0}
\end{align}
where $\del_{\alpha,\beta}$ is the Dirac delta function on $\Rb^2$ centered at $(\alpha,\beta)$. By Duhamel's principle, the unique classical solution to a PDE of the  form
\begin{align}
0
    &=  (\d_t + \Lch_0)u + F , &
h(T,\cdot,\cdot)
    &= G,
\end{align}
is given by
\begin{align}
u(t,a,b)
    &=  \Pc_0(t,T)G(a,b) + \int_t^T \dd s \, \Pc_0(t,s)F(s,a,b), 
\end{align}
where we have introduced the semigroup $\Pc_0$ generated by $\Lch_0$, which is defined as
\begin{align}
\Pc_0(t,s) G(a,b)
    &=  \int_{\Rb^2} \dd \alpha \dd \beta \, \Gamh_0(t,a,b;s,\alpha,\beta) G(\alpha,\beta), \label{eq:P0}
\end{align}
for $0 \leq t \leq s \leq T$.

We now give a recursive expression for $h_n$. 
\begin{proposition}
\label{prop:hn}
The solution $h_n$ to \eqref{pde:hn} is
\begin{align}
h_n(t,a,b)
	&= \int_t^T \dd s\, \Pc_0(t,s) F_n(s,a,b)
	= \int_t^T \dd s \int_{\Rb^2} \dd \alpha \dd \beta\, \Gamh_0(t,a,b;s,\alpha,\beta) F_n(s,\alpha,\beta). \label{eq:hn}
\end{align}
Here, $\Pc_0$ is given in \eqref{eq:P0}, $F_n$ is given in \eqref{eq:Fn}, and $\Gamh_0$ is given by 
\begin{align}
\Gamh_0(t,\av;s,\yv)
	&= \frac{\Psi_0(t,s)}{ 2 \pi \sqrt{ \det \Cv(t,s)}}
	\exp\(	-\frac 12 \< \Cv^{-1}(t,s)(\yv - \av - \mv(t,s)), \yv - \av - \mv(t,s) \> \), \label{eq:Gamh}
\end{align}
where 
\begin{subequations}
\label{eqs:Gam0consts}
\begin{align}
C
	&:= \begin{pmatrix} \om_0^2 & \rho ( \om \psi)_0 \\ \rho (\om\psi)_0 & \psi_0^2 \end{pmatrix}, &
\Cv(t,s)
	&:= \int_t^s \dd s\, C
	= C(s-t), \label{eqs:C} \\
\av
	&:= \begin{pmatrix} a & b \end{pmatrix}^\top, &
\yv
	&:= \begin{pmatrix} \alpha & \beta \end{pmatrix}^\top, \label{eqs:ay} \\
m
	&:= \begin{pmatrix} \mu_0 & \eta_0 \end{pmatrix}^\top, &
\mv(t,s)
	&:= \int_t^s \dd s\, m
	= (s-t) m, \label{eqs:m}
\end{align}
\end{subequations}
and 
\begin{align}
\Psi_0(t,s)
	&:= e^{-2 \gam  (s-t)}\left(\frac{ \zeta  e^{2 \gam  T}-e^{2 \gam  s}}{\zeta  e^{2 \gam  T}-e^{2 \gam  t}}\right)^2. \label{eq:Psi}
\end{align}
\noeqref{eqs:C}\noeqref{eqs:ay}\noeqref{eqs:m}
\end{proposition}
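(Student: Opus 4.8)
The plan is to combine two ingredients that are already in place: the abstract Duhamel representation recorded just before the statement, and an explicit computation of the fundamental solution $\Gamh_0$. The first equality in \eqref{eq:hn} is essentially immediate. By \eqref{eq:Lh}, the PDE \eqref{pde:hn} has the form $(\d_t + \Lch_0)h_n + F_n = 0$ with \emph{zero} terminal data $h_n(T,\cdot,\cdot)=0$. Applying Duhamel's principle as stated (with $G \equiv 0$ and forcing $F_n$) annihilates the homogeneous contribution $\Pc_0(t,T)G$ and leaves $h_n(t,a,b) = \int_t^T \dd s\, \Pc_0(t,s)F_n(s,a,b)$; expanding the semigroup $\Pc_0$ through its defining integral \eqref{eq:P0} gives the second equality. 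Hence the real content of the proposition is the closed form \eqref{eq:Gamh}--\eqref{eq:Psi} for $\Gamh_0$.

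For that, I would first record the structure of $\Lch_0$. By \eqref{eq:Lh}, $\Lch_0 = \Lc_0 + 2\gam\theta_0(t)$, where $\Lc_0 = \tfrac12\,\mathrm{tr}(C\,\nabla^2) + \langle m, \nabla\rangle$ is a \emph{constant}-coefficient diffusion generator with covariance matrix $C$ and drift $m$ from \eqref{eqs:C} and \eqref{eqs:m}, and the only time dependence sits in the purely multiplicative, $(a,b)$-independent term $2\gam\theta_0(t)$. This separation is the crux of the argument: because $2\gam\theta_0$ does not depend on $(a,b)$, I would seek $\Gamh_0$ in the product form $\Gamh_0(t,a,b;s,\alpha,\beta) = \Psi_0(t,s)\,K(t,a,b;s,\alpha,\beta)$, in which $K$ is the fundamental solution of $\d_t + \Lc_0$ and $\Psi_0$ is a scalar prefactor to be determined.

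The factor $K$ is the transition density of the Gaussian diffusion generated by $\Lc_0$: it is exactly the Gaussian appearing in \eqref{eq:Gamh}, with mean $\av + \mv(t,s)$ and covariance $\Cv(t,s) = C(s-t)$, and it satisfies $(\d_t + \Lc_0)K = 0$ together with $K(s,\cdot;s,\alpha,\beta) = \del_{\alpha,\beta}$ --- a standard fact I would either cite or confirm by differentiating the Gaussian. Substituting the product ansatz and using $(\d_t + \Lc_0)K = 0$ yields $(\d_t + \Lch_0)\Gamh_0 = (\d_t\Psi_0 + 2\gam\theta_0\Psi_0)K$, so $\Gamh_0$ solves \eqref{pde:Gamh0} precisely when the prefactor satisfies the scalar linear ODE $\d_t\Psi_0(t,s) = -2\gam\theta_0(t)\Psi_0(t,s)$ with $\Psi_0(s,s)=1$; this normalization, together with $K \to \del_{\alpha,\beta}$ as $t\uparrow s$, supplies the delta terminal condition for $\Gamh_0$. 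I would finish by verifying \eqref{eq:Psi} by direct substitution, in the spirit of Proposition \ref{prop:h0}: rewriting $\theta_0(t) = (\ee^{2\gam t} + \zeta\ee^{2\gam T})/(\ee^{2\gam t} - \zeta\ee^{2\gam T})$ after clearing exponentials reduces the identity $\d_t\log\Psi_0 = -2\gam\theta_0$ to a one-line check.

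I do not expect a serious obstacle; the proof is a verification. The only points demanding care are (i) legitimizing the product ansatz, which comes down precisely to the observation that the zeroth-order coefficient is spatially constant and therefore factors out cleanly, and (ii) appealing to uniqueness of the classical fundamental solution --- guaranteed by the ellipticity of $\Lch_0$ noted in the Corollary --- to conclude that the product we construct is $\Gamh_0$ itself, and not merely \emph{some} solution of \eqref{pde:Gamh0}.
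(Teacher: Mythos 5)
Your proposal is correct and follows essentially the same route as the paper: the paper likewise exploits that the coefficients of $\Lch_0$ are constant in $(a,b)$ to write $\Gamh_0$ as the Gaussian kernel of $\d_t+\Lc_0$ multiplied by the scalar factor $\exp\(2\gam\int_t^s \dd r\,\theta_0(r)\)$, identifies this factor with $\Psi_0(t,s)$, and then applies Duhamel's principle. Your product ansatz and the ODE $\d_t\Psi_0=-2\gam\theta_0\Psi_0$ with $\Psi_0(s,s)=1$ is just a slightly more explicit derivation of the same factorization.
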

\begin{proof}
Let $\Gamh_0$ be the solution to the PDE \eqref{pde:Gamh0}. As the coefficients of the operator $\Lch_0$ are constant in $(a,b)$ we have
\begin{align}
\Gamh_0(t,\av;s,\yv)
	&= \frac{1}
	{ 2 \pi \sqrt{ \det \Cv(t,s)}} \exp\(2\gam \int_t^s \dd r\, \theta_0(r)
	-\frac 12 \< \Cv^{-1}(t,s)(\yv - \av - \mv(t,s)), \yv - \av - \mv(t,s) \> \), \label{eq:Gamh.pre}
\end{align}
where $\Cv, \av, \yv$ and $\mv$ are given in \eqref{eqs:Gam0consts}. Using \eqref{eq:h0}, we compute explicitly
\begin{align}
\exp\(2\gam \int_t^s \dd r\, \theta_0(r)\)
	&= \Psi_0(t,s),
\end{align}
which yields the expression \eqref{eq:Gamh}. Applying Duhamel's principle to \eqref{pde:hn} gives \eqref{eq:hn}.
\end{proof}

\begin{corollary}\label{cor:h1}
Define 
\begin{subequations}
\label{def:h1.summands}
\begin{align}
c^{(1)}\(t,a,b\)
	&:= -\gam^2 f'(\ab) \mu_0, &
I^{(1)}(t)
	&:=\int_t^T \dd s\, s \theta_0^2(s) \Psi_0(t,s),\label{def:h1.part1}\\
c^{(2)}\(t,a,b\)
	&:= -\gam^2 f'(\ab) (a - \ab - t \mu_0), &
I^{(2)}(t)	
	&:= \int_t^T\dd s\, \theta_0^2(s) \Psi_0(t,s),\label{def:h1.part2}\\
c^{(3)}\(t,a,b\)
	&:= \gam  g'(\bb)\eta_0, &
I^{(3)}(t)
	&:= \int_t^T \dd s\, s \theta_0(s) \Psi_0(t,s),  \label{def:h1.part3} \\ 
c^{(4)}\(t,a,b\)
	&:= \gam g'(\bb)(b - \bb - t \eta_0), &
I^{(4)}(t)
	&:= \int_t^T \dd s\, \theta_0(s) \Psi_0(t,s), \label{def:h1.part4}  \noeqref{def:h1.part1}\noeqref{def:h1.part2}\noeqref{def:h1.part3}\noeqref{def:h1.part4}
\end{align}
\end{subequations}
where $\theta_0$ and $\Psi_0$ are given in \eqref{eq:h0} and \eqref{eq:Psi}, respectively. Then the solution $h_1$ to \eqref{pde:hn} is given by 
\begin{align}
h_1(t,a,b)
	&= \sum_{i=1}^4 c^{(i)}\(t,a,b\) I^{(i)}\(t\). \label{eq:h1}
\end{align}
We evaluate the integrals $I^{(i)}$ explicitly in \eqref{eqs:I}.
\end{corollary}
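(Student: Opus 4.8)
The plan is to invoke the Duhamel representation \eqref{eq:hn} from Proposition \ref{prop:hn} with $n=1$, so that the entire task splits into two pieces: writing the forcing $F_1$ in closed form, and then evaluating $\Pc_0(t,s)F_1$. First I would specialize \eqref{eq:Fn} to $n=1$. The self-interaction sums there run over empty or trivial index ranges, and because $h_0$ is a function of $t$ alone (Proposition \ref{prop:h0}) every $(a,b)$-derivative in $\Lc_1$ kills it, so $\Lc_1 h_0 = 0$. What remains is
\begin{align}
F_1 = (f^{-1})_1\, h_0^2 + (f^{-1}g)_1\, h_0 + (4^{-1} f^{-1} g^2)_1. \notag
\end{align}

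Next I would compute the three first-order Taylor coefficients via \eqref{eq:chi.n}, exploiting that $f=f(a)$ depends only on $a$ and $g=g(b)$ only on $b$. This makes each an affine function of $(a,b)$ whose $(a-\ab)$ part carries the factor $f'(\ab)$ and whose $(b-\bb)$ part carries $g'(\bb)$. Collecting like terms, the coefficient of $f'(\ab)(a-\ab)$ in $F_1$ is $-f_0^{-2}\bigl(h_0^2 + g_0 h_0 + \tfrac14 g_0^2\bigr) = -f_0^{-2}\bigl(h_0 + \tfrac12 g_0\bigr)^2$, while the coefficient of $g'(\bb)(b-\bb)$ is $f_0^{-1}\bigl(h_0 + \tfrac12 g_0\bigr)$. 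This completing-of-the-square is the crux of the argument: substituting $h_0 + \tfrac12 g_0 = \sqrt{\varphi f_0}\,\theta_0$ from \eqref{eq:h0} together with $\gam^2 = \varphi/f_0$ from \eqref{def:gam.zeta} collapses the two coefficients to $-\gam^2\theta_0^2(s)$ and $\gam\theta_0(s)$, respectively, leaving
\begin{align}
F_1(s,a,b) = -\gam^2\theta_0^2(s)\, f'(\ab)(a-\ab) + \gam\,\theta_0(s)\, g'(\bb)(b-\bb). \notag
\end{align}

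Finally I would apply the semigroup. The key observation is that by \eqref{eq:Gamh} the kernel $\Gamh_0(t,\av;s,\yv)$ is the scalar $\Psi_0(t,s)$ times a bivariate Gaussian density in $\yv$ with mean $\av + \mv(t,s)$; hence for any affine $G$ one has $\Pc_0(t,s)G = \Psi_0(t,s)\,G(\av + \mv(t,s))$, so in particular $\Pc_0(t,s)(\alpha-\ab) = \Psi_0(t,s)\bigl(a-\ab+\mu_0(s-t)\bigr)$ and $\Pc_0(t,s)(\beta-\bb) = \Psi_0(t,s)\bigl(b-\bb+\eta_0(s-t)\bigr)$. Inserting this into \eqref{eq:hn} and splitting $\mu_0(s-t) = \mu_0 s - t\mu_0$ (and likewise $\eta_0(s-t) = \eta_0 s - t\eta_0$) separates each integrand into a term proportional to $s$ and an $s$-independent term; this is precisely the partition into the four products $c^{(i)}I^{(i)}$ appearing in the statement of Corollary \ref{cor:h1}, and matching the pieces yields \eqref{eq:h1}.

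The whole computation is essentially bookkeeping once $h_0$ is explicit, so I do not expect a genuine obstacle; the only delicate step is the algebraic simplification in the second paragraph — recognizing the perfect square and the factorization $h_0 + \tfrac12 g_0 = \sqrt{\varphi f_0}\,\theta_0$ — paired with the clean action of the semigroup on affine functions (evaluate at the Gaussian mean, rescale by $\Psi_0$), which is what makes the four integrals $I^{(i)}$ and coefficients $c^{(i)}$ emerge in exactly the claimed form.
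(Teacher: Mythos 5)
Your proposal is correct and follows essentially the same route as the paper's Appendix~A proof: Duhamel's formula \eqref{eq:hn} with $n=1$, the first-order Taylor coefficients of $f^{-1}$, $f^{-1}g$, and $4^{-1}f^{-1}g^2$, the action of the Gaussian kernel $\Gamh_0$ on affine functions (yielding the factor $\Psi_0$ and the shifted means), and the split of $\mu_0(s-t)$ and $\eta_0(s-t)$ into the four products $c^{(i)}I^{(i)}$. Your explicit observations that $\Lc_1 h_0=0$ and that the $(\alpha-\ab)$ and $(\beta-\bb)$ coefficients collapse via the perfect square $h_0+\tfrac12 g_0=\sqrt{\varphi f_0}\,\theta_0$ are exactly the algebra the paper performs implicitly in passing to \eqref{eq:h1.int}.
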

\begin{proof}
See Appendix \ref{app:h1proof}.
\end{proof}
With an explicit expression for $h_1$ in hand, we are able to construct the first order approximation $\nuba 1$ to the optimal liquidation strategy $\nu^\ast$. By \eqref{eq:nu.bar} and \eqref{eq:nu0}, we have 
\begin{align}
\nuba 1 (t,q,a,b)
	&= - \(  \gam \frac{ 1 + \zeta \ee^{ 2 \gam (T-t)}}{1 - \zeta \ee^{2\gam (T-t)}}  
		+ \frac 1{f(a)} \sum_{i=1}^4 c^{(i)}\(t,a,b\) I^{(i)} \(t\)\)q \Bigg|_{(\ab,\bb) = (a,b)}. \label{eq:nu1}
\end{align}
Higher order approximations to the transformed value function $h$ can be computed explicitly. For the sake of readability, we do not carry out these calculations and instead focus on zero and first order approximations, which are sufficient to capture the lowest order effects of stochastic price impact.

%
%

\subsection{Analysis of some limiting cases}
\label{sec:limstrats}

The asymptotic approximations of the transformed value function $h$ (and hence the approximations to $\nu^\ast$) developed in Section \ref{sec:asymptotics} depend on the parameters $\kappa$, which controls the penalty for liquidation occurring at the trading horizon $T$, and $\varphi$, which controls the penalty for holding shares of $S^\nu$ throughout trading. In this section, we develop strategies that are independent of one or both of these parameters by taking the limit of the optimal strategy approximations $\nuba N$ as the parameters $\kappa$ and $\varphi$ tend to $\infty$ and $0$, respectively. One motivation for considering such strategies is financial. Taking the limit of the optimal strategy $\nu^\ast$ as $\kappa \to \infty$ corresponds to a setting in which the trader is adamant about his entire inventory being liquidated before $T$. Then, taking $\varphi \to 0$ corresponds to a setting in which the trader both demands complete liquidation by the trading horizon and is indifferent about holding inventory. Another motivation for developing limiting strategies is analytic tractability. We shall see later in this section that the limiting strategy approximations $\kappa \to \infty$ and $(\kappa,\varphi) \to (\infty,0)$ have far fewer terms than their corresponding nonlimiting case strategies, thereby facilitating a financial interpretation of the resulting expressions. 

For the remainder of this section, we make the dependence on parameters $\kappa$ and $\varphi$ of the PDE solutions $h_n$ and strategies $\nu$ explicit with the superscipt notation
\begin{align}
h_n 
    &\equiv h_n^{(\kappa,\varphi)}, &
\nu
    &\equiv \nu^{(\kappa, \varphi)}.
\end{align}
We refer to the strategies $\nu^{(\kappa,\varphi)}$ with $k \neq \infty$ and $(\kappa, \varphi) \neq (\infty,0)$ as \textit{nonlimiting strategies}. Let us define the \textit{limiting strategies}
\begin{align}
\nu_{AC}^{(\infty, \varphi )}
    &:= \lim_{\kappa \to \infty} \nu_{AC}^{(\kappa, \varphi)}, &
\nu_{AC}^{(\infty, 0)}
    &:= \lim_{\varphi \to 0} \lim_{\kappa \to \infty} \nu_{AC}^{(\kappa, \varphi)}, \\ 
\(\nu^\ast\)^{(\infty, \varphi )}
    &:= \lim_{\kappa \to \infty} \(\nu^\ast\)^{(\kappa, \varphi)}, &
\(\nu^\ast \)^{(\infty, 0)}
    &:= \lim_{\varphi \to 0} \lim_{\kappa \to \infty} \(\nu^\ast \)^{(\kappa, \varphi)}, \\   
\nuba N^{(\infty,\varphi)}
    &:= \lim_{\kappa \to \infty} \nuba N^{(\kappa, \varphi)}, &
\nuba N^{(\infty,0)}
    &:= \lim_{\varphi \to 0} \lim_{\kappa \to \infty} \nuba N^{(\kappa, \varphi)}.
\end{align}
Below, we provide explicit expressions for $\nuba N^{(\infty,\varphi)}$ and $\nuba N^{(\infty,0)}$ for $N \in \{0,1\}$. In order to construct $\nuba N^{(\infty,\varphi)}$ and $\nuba N^{(\infty,0)}$, it will be helpful to define
\begin{align}
h_n^{(\infty, \varphi)}(t)
	&:= \lim_{\kappa \to \infty}  h_n^{(\kappa, \varphi)}(t), &
h_n^{(\infty, 0)}(t)
	&:= \lim_{\varphi \to 0} \lim_{\kappa \to \infty} h_n^{(\kappa, \varphi)}(t) . \label{eq:hn.limiting}
\end{align}
As the strategies $\nuba N^{(\kappa, \varphi)}$ depend on $(\kappa, \varphi)$ only through the functions $h_n^{(\kappa,\varphi)}$, we can obtain the strategies $\nuba N^{(\infty,\varphi)}$ and $\nuba N^{(\infty,0)}$ by replacing $h_n^{(\kappa,\varphi)}$ in \eqref{eq:nu.bar} with $h_n^{(\infty,\varphi)}$ and $h_n^{(\infty,0)}$, respectively. From \eqref{eq:h0}, a straightforward computation yields 
\begin{align}
h_0^{(\infty,\varphi)}(t)
    &= - \frac{g_0}{2} - f_0 \gam \coth\( \gam (T-t)\), &
h_0^{(\infty,0)}(t)
    &= -\frac{g_0}{2}- \frac{f_0}{T-t}. \label{eq:h0.limiting}
\end{align}
We compute $h_1^{(\infty,\varphi)}$ and $h_1^{(\infty,0)}$ in Appendix \ref{app:h1} to obtain
\begin{align}
h_1^{(\infty,\varphi)}(t,a,b)
	&= \sum_{i=1}^4 c^{(i)}(t,a,b) I_{(\infty,\varphi)}^{(i)}(t), \label{eq:hL} \\
h_1^{(\infty,0)}(t,a,b)
    &= -\frac{ f'(\ab)}{2(T-t)} \(2(a - \ab) + \mu_0(T-t)\)
        - \frac {g'(\bb)}6 \(3(b - \bb) + \eta_0 ( T-t)\), \label{eq:hLH}
\end{align}
where $c^{(i)}$ and $I_{(\infty,\varphi)}^{(i)}$ are given in \eqref{def:h1.summands} and \eqref{eqs:I.limits.1}, respectively. For $n \in \{ 0, 1 \}$, replacing $h_n^{(\kappa,\varphi)}$ with $h_n^{(\infty,\varphi)}$ in \eqref{eq:nu.bar} yields the strategies
\begin{align}
\nuba 0^{(\infty,\varphi)} (t,q,a,b)
	&=  \gam \coth\( \gam (T-t) \) q \Bigg|_{(\ab,\bb) = (a,b)}, \\
\nuba{1}^{(\infty,\varphi)} (t,q,a,b)
	&= \(  \gam \coth\( \gam (T-t) \) - \frac 1{f(a)} \sum_{i=1}^4 c^{(i)} (t,a,b) I_{(\infty,\varphi)}^{(i)}(t) \)q \Bigg|_{(\ab,\bb) = (a,b)},
\end{align}
and replacing $h_n^{(\kappa,\varphi)}$ in \eqref{eq:nu.bar} with $h_n^{(\infty,0)}$ in \eqref{eq:nu.bar} yields the strategies
\begin{align}
\nuba 0^{(\infty,0)}(t,q,a,b)
	&= \frac{1}{T-t} q, \label{eq:twa}\\
\nuba 1^{(\infty,0)}(t,q,a,b)
	&= \( \frac{1}{T-t} + \frac 12  \frac{\mu(a) f'(a)}{f(a)} + \frac 16 (T-t) \frac{\eta(b)  g'(b) }{ f(a) }\) q. \label{eq:corrected.twa}
\end{align}

The reader will recognize \eqref{eq:twa} as the \textit{time-weighted average} strategy. We can thus view equation \eqref{eq:corrected.twa} as a first order correction to the time-weighted average strategy. The second term in \eqref{eq:corrected.twa} instructs the trader to adjust his trading speed in proportion to the product of the slope $f'$ of the temporary impact function and the drift $\mu$ of the process $a$. For instance, suppose that at time $t$, $\mu(a_t) < 0$ and $f'(a_t) > 0$. In that case, the temporary price impact process $a$ is drifting downwards, and the price impact $f(a)$ will decrease with $a$. The second term in \eqref{eq:corrected.twa} instructs the trader to slow down liquidation because he expects a lower temporary impact in the near future. If $\mu(a_t) > 0$ and $f'(a_t) > 0$, then a trader following $\nuba 1^{(\infty,0)}$ will speed up trading as he expects higher price impact soon. Of course, if $\mu(a_t) f'(a_t)$ is small relative to $f(a)$, then the contributions from this term are small. The third term instructs the trader to adjust his trading speed proportional to the product $\eta(b) g'(b)$. Like the second term, the adjustments of the third term are weighted relative to the temporary price impact. But unlike the second term, the third term's influence on the trading speed diminishes as time approaches the trading horizon. This is intuitive as permanent impact matters not to a trader who is soon to exit the market. We note that the third term in strategy \eqref{eq:corrected.twa} causes more dramatic deviations from the time-weighted strategy when the permanent price impact is large relative to the temporary price impact early in the trading period.

%
%

\section{Numerical examples}
\label{sec:simulation}

In this section, we provide examples of simulated trading using the strategies we developed in Sections \ref{sec:pde.solution} and \ref{sec:limstrats}. In Example \ref{ex:first-order-buy}, we simulate a single trading day to illustrate the effects of the correction terms present in the first order limiting strategy $\nuba 1^{(\infty,0)}$. In Example \ref{ex:monte-carlo}, we simulate a large number of trading days to demonstrate the improvement over the Almgren-Chriss strategy our approximations give in the nonlimiting and limiting cases. Furthermore, we demonstrate an improvement of the first order strategies over the zeroth order strategies in the nonlimiting and limiting cases. 

Throughout this section, we assume the price impact processes $a$ and $b$ are Cox-Ingersoll-Ross (\cite{cox1985theory}) processes with the dynamics
\begin{align}
\mu(z)
	&= \lambda_a \( \theta_a - z \) , & 
\omega(z) 
	&= \sig_a \sqrt z, \\
\eta(z)
	&= \lambda_b\( \theta_b -z \), & 
\psi(z) 
	&= \sig_b \sqrt z.
\end{align}
where the constants $\lambda_a, \theta_b, \theta_a, \theta_b, \sig_a$ and $\sig_b$ are all positive, and the Brownian motions $B_t^{(1)}$ and $B_t^{(2)}$ are correlated with parameter $\rho$. Furthermore, we require that the coefficients $\lambda_a, \theta_b, \theta_a, \theta_b, \sig_a$ and $\sig_b$ satisfy the Feller condition
\begin{align}
2 \lambda_a \theta_a 
    &> \sig_a^2, &
2 \lambda_b \theta_b 
    &> \sig_b^2, \label{eq:feller}
\end{align}
so that $a$ and $b$ are strictly positive processes. Explicitly, we have
\begin{align}
\dd a_t 
    &= \lambda_a \( \theta_a - a_t \) \dd t + \sig_a \sqrt{a_t} \dd B_t^{(2)}, &
\dd b_t 
    &= \lambda_b\( \theta_b - b_t \) \dd t + \sig_b \sqrt{b_t} \dd B_t^{(1)}.
\label{eq:impact.dynamics.ex1}
\end{align}
We also take 
\begin{align}
f(a) &= a , &
g(b) &= b.
\end{align} 
Note that because we require the Feller condition \eqref{eq:feller} to be satisfied we have $f(a_t) > 0$ and $g(b_t) > 0$ for all $0 \leq t \leq T$.  Thus, both temporary and permanent price impact processes $f(a)$ and $g(b)$ remain strictly positive.

\begin{example}
\label{ex:first-order-buy}
Let us suppose the trader demands complete liquidation by $T$ and is indifferent to holding inventory. In this case, the zeroth order strategy $\nuba 0^{(\infty,0)}$ is equal to time-weighted average strategy \eqref{eq:twa} and thus does not depend on $(a,b)$. Under the dynamics \eqref{eq:impact.dynamics.ex1}, the first order strategy $\nuba 1^{(\infty,0)}$ is given by 
\begin{align}
\nuba 1^{(\infty,0)} (t,q,a,b)
    &= \( \frac 1{T-t} + \frac{\lambda_a \(\theta_a - a\)}{2 a} + (T-t)\frac{\lambda_b\(\theta_b  - b \)}{6a}   \)q. \label{eq:ex-strat-order1-LH}
\end{align}

The second and third terms in \eqref{eq:ex-strat-order1-LH} are a time-dependent linear combination of the distance of the price impact processes from their respective long-run means relative to $a$. The mean reversion parameters $\lambda_a$ and $\lambda_b$ control the aggressiveness of the adjustment. When the mean reversion parameters are large, we expect that deviations of $a$ and $b$ from their respective means $\theta_a$ and $\theta_b$ to be short lived. In this case, the strategy $\nuba 1^{(\infty,0)}$ adjusts quickly to take advantage of these deviations.

When a trader is following the time-weighted average strategy, at each instant he sells a fraction of his inventory that is inversely proportional to the remaining time $T-t$. As such $\nuba 0^{(\infty,0)} > 0$ for all $t$ in the trading period. In some instances, however, the strategy $\nuba 1^{(\infty,0)}$ instructs the trader to purchase shares of $S^\nu$ This occurs in times of relatively large price impact. Although it may seem counter-intuitive for a trader who wishes to liquidate a position to buy shares, this strategy can increase the objective function $H^\nu$ if, for example, the trader buys shares of $S^\nu$ during a period of relatively high price impact, putting upward pressure on the midprice $S^\nu$, then subsequently sells shares rapidly during a period of low price impact. 

In Figure \ref{fig:ex1.2}, we provide a simulated path of $(a,b)$ and the paths $(X^\nu,S^\nu, Q^\nu)$ that result from following strategies $\nu = \nuba 0^{(\infty,0)}$ and $\nu = \nuba 1^{(\infty,0)}$ using parameters
\begin{align}
\left. \begin{aligned}
\lambda_a 
    &= 10, & 
\theta_a 
    &= 2 \times 10^{-6}, & 
\sig_a 
    &= 1.5 \times 10^{-3},  \\
\lambda_b 
    &= 10, &
\theta_b 
    &= 5 \times 10^{-5}, & 
\sig_b 
    &= 3 \times 10^{-3}, \\
T 
    &= 1, &
\rho 
    &= 0.7, &
\sig
    &= 0.01.
\end{aligned} \right\} \label{eq:ex1.param.values}
\end{align}
This simulation demonstrates how the first order strategy $\nuba 1^{(\infty,0)}$ responds to the high and low price impact values encountered early in the trading period. The strategy $\nuba 1^{(\infty,0)}$ instructs the trader to purchase shares of $S^\nu$ when price impact is relatively high early in the trading period. The price impact processes subsequently decrease below their long run means, and the trader following $\nuba 1^{(\infty,0)}$ liquidates shares at rate in excess of the trading speed dictated by $\nuba 0^{(\infty,0)}$. In this example, $\nuba 1^{(\infty,0)}$ is more profitable than $\nuba 0^{(\infty,0)}$.

While the the optimal strategy approximations we have developed in this paper suggest that, under certain market conditions, the trader should buy shares of the stock when the price impact is high, the authors of \cite{bertsimas1999optimal} note that in practice, if a trader wants to sell a block of securities then it is usually antithetical to their stance as a seller to purchase shares of the security during the trading period. In some cases, it is a violation of a manager's fiduciary responsibility to their client and is hence illegal. As such, a liquidation strategy $\nu$ that can sometimes instruct a trader to buy could be modified to be $\max(0,\nu)$. A truncated strategy may not be optimal with respect to the objective functional $H^\nu$ that we have defined,  but in our numerical simulations the periods in which our approximate strategies instruct the trader to buy are short-lived.
\end{example}

\begin{example} \label{ex:monte-carlo}

In this example, we carry out a number of Monte Carlo simulations to evaluate the performance of the liquidation strategies $\nuba N$ as well as the limiting strategies $\nuba N^{(\infty,\varphi)}$  and $\nuba N^{(\infty,0)}$ for $N \in \{ 0,1\}$. We demonstrate the relative improvement a trader gains by following $\nuba 0$ over $\nu_{AC}$ and the relative improvement a trader gains by following $\nuba 1$ over $\nuba 0$. We repeat this experiment in the limiting case $\kappa \to \infty$. We see from \eqref{eq:twa} that in the limiting case $(\kappa, \varphi) \to (\infty,0)$, both $\nu_{AC}^{(\infty,0)}$ and $\nuba 0^{(\infty,0)}$ are equal to the time-weighted average strategy (i.e. $\nu_{AC}^{(\infty,0)} = \nuba 0^{(\infty,0)} = q (T-t)^{-1}$). So, we demonstrate the relative performance increase a trader gains by following $\nuba 1^{(\infty,0)}$ over $\nuba 0^{(\infty,0)}$. 

To this end, let us define
\begin{align}
\Phi(\nu)
    &:= X_T^\nu + Q_T^\nu (S_T^\nu - \kappa Q_T^\nu) - \varphi \int_0^T \dd s\, \( Q_s^\nu\)^2, &
\Phi^{(\infty,\varphi)}(\nu)
    &:= X_T^\nu - \varphi \int_0^T \dd s\, \( Q_s^\nu\)^2, \label{eq:pref.criteria} \\
\Phi^{(\infty,0)}(\nu) 
    &:= X_T^\nu.
\end{align}
For a fixed strategy $\nu$, the random variables $\Phi(\nu)$, $\Phi^{(\infty,\varphi)}(\nu)$ and $\Phi^{(\infty,0)}(\nu)$ give the value a trader following $\nu$ achieves on a single path of $(X_T^\nu,S_T^\nu,Q_T^\nu,a,b)$. 
In both of the limiting cases $\kappa \to \infty$ and $(\kappa,\varphi) \to (\infty,0)$, the optimal strategies ensure liquidation by the terminal time $T$. Therefore, the term $Q_T^\nu (S_T^\nu - \kappa Q_T^\nu)$ that accounts for liquidation of the remaining shares at time $T$ does not appear in either $\Phi^{(\infty,\varphi)}(\nu)$ or $\Phi^{(\infty,0)}(\nu)$. Let us also define the sample mean of our Monte carlo simulations as follows
\begin{align}
\widehat \Phi (\nu) 
    &:= \frac 1M \sum_{i=1}^M \Phi_i(\nu), \label{eq:sample.mean}
\end{align}
where $\Phi_i(\nu)$ is the value of $\Phi(\nu)$ obtained by the $i$-th independent path of $(X_T^\nu,S_T^\nu,Q_T^\nu,a,b)$.  Observe that $\widehat{\Phi}(\nu)$ is a statistical estimate of the performance criteria $H^\nu$. The definitions for $\widehat\Phi^{(\infty,\varphi)}(\nu)$ and $\widehat\Phi^{(\infty,0)}(\nu)$, our statistical estimators for $H^\nu$ in the limiting cases $\kappa \to \infty$ and $(\kappa,\varphi)\to(\infty,0)$, are analogous. 

In this example, we take the following parameters 
\begin{align}
\left. \begin{aligned}
\lambda_a 
    &= 1, & 
\theta_a 
    &= 1 \times 10^{-4}, & 
\sig_a 
    &= 8 \times 10^{-3},  \\
\lambda_b 
    &= 1, &
\theta_b 
    &= 5 \times 10^{-4}, & 
\sig_b 
    &= 8 \times 10^{-3}, \\
T 
    &= 1, &
\rho 
    &= 0.7, &
\varphi
    &= 0.01, \\
\kappa  
    &= 10, &
\sig
    &= 0.2.&
\end{aligned} \right\} \label{eq:ex2.param.values}
\end{align}
We note that the processes $a$ and $b$ under the parameter choice \eqref{eq:ex2.param.values} both satisfy the Feller condition \eqref{eq:feller}. Furthermore, we choose the initial conditions
\begin{align}
t 
    &= 0, & 
X_0 
    &= 0, & 
S_0 
    &= 40, & 
Q_0 
    &= 5000, & 
a_0
    &= \theta_a, &
b_0
    &= \theta_b. \label{eq:init.cond}
\end{align}
The price impact parameters in the Almgren-Chriss strategy $\nu_{AC}$ are constant, and we take them to be $(a_0,b_0) = (\theta_a, \theta_b)$. In total, we run $M = 10,000$ sample paths.

In our Monte Carlo simulations, we obtain
\begin{align}
\frac{\widehat\Phi\(\nuba 0\) - \widehat\Phi\(\nu_{AC}\)}{\widehat\Phi\(\nu_{AC}\)}\cdot 10^4    
    &= 6.0385, &
\frac{\widehat\Phi\(\nuba 1\) - \widehat\Phi\(\nuba{0}\)}{\widehat\Phi\(\nuba{0}\)}  \cdot 10^4
    &= 0.0224, \label{eq:centered-strat-results} 
\end{align}
in the nonlimiting case, 
\begin{align}
\frac{\widehat\Phi^{(\infty,\varphi)}\(\nuba 0^{(\infty,\varphi)}\) - \widehat\Phi^{(\infty,\varphi)}\(\nu_{AC}^{(\infty,\varphi)}\)}{\widehat\Phi^{(\infty,\varphi)}\(\nu_{AC}^{(\infty,\varphi)}\)}\cdot 10^4    
    &= 6.0367, \label{eq:centered-strat-results-L-1}\\
\frac{\widehat\Phi^{(\infty,\varphi)}\(\nuba 1^{(\infty,\varphi)}\) - \widehat\Phi^{(\infty,\varphi)}\(\nuba{0}^{(\infty,\varphi)}\)}{\widehat\Phi^{(\infty,\varphi)}\(\nuba{0}^{(\infty,\varphi)}\)}  \cdot 10^4
    &= 0.0224, \label{eq:centered-strat-results-L-2}
\end{align}
in the limiting case $\kappa \to \infty$, and 
\begin{align}
\frac{\widehat\Phi^{(\infty,0)}\(\nuba 1^{(\infty,0)}\) - \widehat\Phi^{(\infty,0)}\(\nuba{0}^{(\infty,0)}\)}{\widehat\Phi^{(\infty,0)}\(\nuba{0}^{(\infty,0)}\)}  \cdot 10^4
    &= 0.8131. \label{eq:centered-strat-results-LH}
\end{align}
in the limiting case $(\kappa,\varphi) \to (\infty,0)$. 

Equations \eqref{eq:centered-strat-results}, \eqref{eq:centered-strat-results-L-1}, \eqref{eq:centered-strat-results-L-2} and \eqref{eq:centered-strat-results-LH} demonstrate that in the nonlimiting case and both limiting cases, the trader gains a relative value increase from following the zeroth order strategy approximation over the Almgren-Chriss strategy and from following the first order strategy approximation over the zeroth order strategy approximation.

In Figures \ref{fig:trim-centered-zero-over-ac} and \ref{fig:trim-centered-first-over-zero}, we plot histograms of the relative performance
\begin{align}
&\frac{\Phi(\nuba 0) - \Phi(\nu_{AC})}{\Phi(\nu_{AC})}\cdot 10^4, &
&\frac{\Phi(\nuba 1) - \Phi(\nuba{0})}{\Phi(\nuba{0})}  \cdot 10^4, \label{eq:rel.perf}
\end{align}
respectively, in Figures \ref{fig:trim-centered-zero-over-ac-L} and \ref{fig:trim-centered-first-over-zero-L} we plot histograms of the relative performance
\begin{subequations}
\label{eqs:rel.perf.L}
\begin{align}
&\frac{\Phi^{(\infty,\varphi)}\(\nuba 0^{(\infty,\varphi)}\) - \Phi^{(\infty,\varphi)}\(\nu_{AC}^{(\infty,\varphi)}\)}{\Phi^{(\infty,\varphi)}\(\nu_{AC}^{(\infty,\varphi)}\)}\cdot 10^4, \label{eq:rel.perf.L.0} \\
&\frac{\Phi^{(\infty,\varphi)}\(\nuba 1^{(\infty,\varphi)}\) - \Phi^{(\infty,\varphi)}\(\nuba{0}^{(\infty,\varphi)}\)}{\Phi^{(\infty,\varphi)}\(\nuba{0}^{(\infty,\varphi)}\)}  \cdot 10^4, \label{eq:rel.perf.L.1}
\end{align}
\end{subequations}
respectively, and in Figure \ref{fig:trim-centered-first-over-zero-LP}, we plot a histogram of the relative performance
\begin{align}
\frac{\Phi^{(\infty,0)}\(\nuba 1^{(\infty,0)}\) - \Phi^{(\infty,0)}\(\nuba{0}^{(\infty,0)}\)}{\Phi^{(\infty,0)}\(\nuba{0}^{(\infty,0)}\)}  \cdot 10^4. \label{eq:rel.perf.LH}
\end{align}
Figures \ref{fig:trim-centered-zero-over-ac} and \ref{fig:trim-centered-first-over-zero} show that in addition to the expected value increases seen in \eqref{eq:centered-strat-results}, $\Phi_i (\nuba 0) > \Phi_i (\nu_{AC})$ and $\Phi_i(\nuba 1) > \Phi_i( \nuba 0)$ more often than not. We see the same result in both the limiting cases $\kappa \to \infty$ and $(\kappa, \varphi) \to (\infty,0)$. 

For the chosen parameters \eqref{eq:ex2.param.values} and initial conditions \eqref{eq:init.cond}, the relative improvement gained by following a first order strategy approximation is muted compared to the relative improvement of the zeroth order strategy over Almgren-Chriss. When $(a_0, b_0) = (\theta_a, \theta_b)$, the price impact parameters typically hover around their respective long-run means, keeping the correction terms in $\nuba 1$, $\nuba 1^{(\infty,\varphi)}$ and $\nuba 1^{(\infty,0)}$ small. Let us keep the parameter values \eqref{eq:ex2.param.values} but modify the initial conditions as follows
\begin{align}
t 
    &= 0, & 
X_0 
    &= 0, & 
S_0 
    &= 40, & 
Q_0 
    &= 5000, & 
a_0
    &= 1.5 \theta_a, &
b_0
    &= 1.5 \theta_b. \label{eq:init.cond.above}
\end{align}
We note that the difference between the initial conditions \eqref{eq:init.cond} and \eqref{eq:init.cond.above} are the values of $a_0$ and $b_0$. With the initial conditions \eqref{eq:init.cond.above}, the price impact processes $a$ and $b$ start above their long-run means $\theta_a$ and $\theta_b$ and will typically float downwards towards their respective means throughout the trading period.  When the price impact parameters begin away from their long-run means, the correction terms present in the first order strategy approximations have a more pronounced influence on the trading strategy, and we see a larger relative improvement of the first order strategies over the zeroth order strategies. 

We repeat the above experiments with the initial conditions \eqref{eq:init.cond.above} and obtain 
\begin{align}
\frac{\widehat\Phi\(\nuba 1\) - \widehat\Phi\(\nuba{0}\)}{\widehat\Phi\(\nuba{0}\)}  \cdot 10^4
    &= 0.2682, &
\frac{\widehat\Phi^{(\infty,\varphi)}\(\nuba 1^{(\infty,\varphi)}\) - \widehat\Phi^{(\infty,\varphi)}\(\nuba{0}^{(\infty,\varphi)}\)}{\widehat\Phi^{(\infty,\varphi)}\(\nuba{0}^{(\infty,\varphi)}\)}  \cdot 10^4
    &= 0.2683, \\& &
\frac{\widehat\Phi^{(\infty,0)}\(\nuba 1^{(\infty,0)}\) - \widehat\Phi^{(\infty,0)}\(\nuba{0}^{(\infty,0)}\)}{\widehat\Phi^{(\infty,0)}\(\nuba{0}^{(\infty,0)}\)}  \cdot 10^4
    &= 3.541.
\end{align}
The relative performance of the first order strategies increases by an order of magnitude when going from the initial conditions \eqref{eq:init.cond} to \eqref{eq:init.cond.above}. In Figures \ref{fig:trim-above-first-over-zero}, \ref{fig:trim-above-first-over-zero-L} and \ref{fig:trim-above-first-over-zero-LH}, we plot histograms of the relative performance \eqref{eq:rel.perf} (right), \eqref{eqs:rel.perf.L} and \eqref{eq:rel.perf.LH}, respectively, with the initial conditions \eqref{eq:init.cond.above}. We see that in all three cases, the first order strategy out-performs the zeroth order strategy more often than not.

\end{example}

%
%

\section{Conclusion}
\label{sec:conclusion}

In this paper, we present a formal approximation to the optimal trading strategy for a trader facing the liquidation problem under a market model in which the price impact factors are stochastic. Our model supposes general diffusion dynamics of the price impact factors $a$ and $b$ and allows for the price impact processes $f(a)$ and $g(b)$ to be nonlinear functions of the diffusions $a$ and $b$. The continuous-time Almgren-Chriss strategy is encapsulated in our model, and the zeroth order approximation to the optimal liquidation strategy is interpreted as the Almgren-Chriss strategy where the price impact parameters are continuously recalibrated. Higher-order strategy approximations take into account the geometry of the price impact processes diffusion coefficients, allowing a trader to adjust his trading strategy for times of relatively high and low price impact. We also demonstrate numerically that higher-order strategy approximations outperformed lower order approximations. 

%
%

\bibliographystyle{chicago}
\bibliography{bibfile}

%
%
\clearpage
\appendix

\section{Proof of Corollary \ref{cor:h1}}\label{app:h1proof}

\begin{proof}

By \eqref{eq:hn}, we have
\begin{align}
&h_1(t,a,b) 
    = \int_t^T \dd s\, \Pc_0(t,s) F_1(s,a,b) \\
    =& \int_t^T \dd s \int_{\Rb^2} \dd \alpha \dd \beta\, \Gamh_0(t,a,b;s,\alpha,\beta) \( (f^{-1})_1(\alpha) h_0^2(s) + (f^{-1}g)_1(\alpha,\beta) h_0(s) + (4^{-1} f^{-1} g^2)_1(\alpha,\beta) \). \label{eq:h1.pre}
\end{align}
Using \eqref{eq:chi.n}, we see that
\begin{align}
\( f^{-1}\)_1 ( \alpha)
    &= - \frac{f'(\ab)}{f^2(\ab)} ( \alpha - \ab), &
(f^{-1}g)_1(\alpha,\beta)
    &= \frac{f'(\ab) g(\bb)}{f^2(\ab)} (\alpha - \ab) + \frac{g'(\bb)}{f(\ab)} (  \beta- \bb), \label{eq:fun1.line1}
\end{align}
and 
\begin{align}
\(4^{-1} f^{-1} g^2\)_1(\alpha,\beta)
    &= -\frac{f'(\ab) g^2(\bb)}{4 f^2(\ab)} ( \alpha - \ab) + \frac{ g(\bb) g'(\bb)}{2 f(\ab)} ( \beta - \bb). \label{eq:fun1.line2}
\end{align}
Additionally, by \eqref{eq:Gamh},
\begin{align}
\int_{\Rb^2} \dd \alpha \dd \beta\, \Gamh_0(t,a,b;s,\alpha,\beta)\, \alpha
    &= \Psi_0(t,s) \(a + \mu_0(s -t)\),  \label{eq:gam.alpha}\\
\int_{\Rb^2} \dd \alpha \dd \beta\, \Gamh_0(t,a,b;s,\alpha,\beta)\, \beta
    &=\Psi_0(t,s) \( b + \eta_0 ( s - t)\), \label{eq:gam.beta}
\end{align}
where $\Psi_0$ is given in \eqref{eq:Psi}. Recalling the expression \eqref{eq:h0} for $h_0$ and applying \eqref{eq:fun1.line1}, \eqref{eq:fun1.line2}, \eqref{eq:gam.alpha} and \eqref{eq:gam.beta} to \eqref{eq:h1.pre}, we evaluate the integrals with respect $\alpha$ and $\beta$ present in \eqref{eq:h1.pre} to  obtain
\begin{align}
h_1(t,a,b)
    &=  \int_t^T \dd s\, \Psi_0(t,s)
        \(- \gam^2 f'(\ab)\( a - \ab + \mu_0 (s-t)\)\theta_0^2(s) \right. \\ &\left. \hspace{5cm}
        + \gam g'(\bb) \( b - \bb + \eta_0 ( s- t)\)\theta_0(s)\). \label{eq:h1.int}
\end{align}
We can rewrite the integrand of \eqref{eq:h1.int} as 
\begin{align}
&\Psi_0(t,s)\(- \gam^2 f'(\ab)\( a - \ab + \mu_0 (s-t)\)\theta_0^2(s)  + \gam g'(\bb) \( b - \bb + \eta_0 ( s- t)\)\theta_0(s)\)\\
    = &c^{(1)} \(t,a,b\)  s \theta_0^2(s) \Psi_0(t,s) + c^{(2)} \(t,a,b\) \theta_0^2(s) \Psi_0(t,s) \label{eq:h1.integrand} \\ 
     &\hspace{1cm}+c^{(3)} \(t,a,b\)  s \theta_0(s) \Psi_0(t,s) + c^{(4)} \(t,a,b\) \theta_0(s) \Psi_0(t,s)         
\end{align}
where the $c^{(i)}$ are defined in \eqref{def:h1.summands}. From \eqref{eq:h1.int} and \eqref{eq:h1.integrand}, we see--with the arguments of $c^{(i)}$ suppressed--that
\begin{align}
h_1(t,a,b)
    &= \int_t^T \dd s\, \( c^{(1)} s \theta_0^2(s) \Psi_0(t,s) + c^{(2)} \theta_0^2(s) \Psi_0(t,s) + c^{(3)} s \theta_0(s) \Psi_0(t,s) + c^{(4)} \theta_0(s) \Psi_0(t,s)\) \\
    &= c^{(1)} I^{(1)} \(t\)+ c^{(2)} I^{(2)} \(t\) + c^{(3)} I^{(3)} \(t\) + c^{(4)} I^{(4)} \(t\), 
\end{align}
where the $I^{(i)}$ are defined in \eqref{def:h1.summands}. We have thus established \eqref{eq:h1}. 
\end{proof}

\section{Computation of \texorpdfstring{$h_1$}{h1}, \texorpdfstring{$h_1^{(\infty,\varphi)}$}{h1L} and \texorpdfstring{$h_1^{(\infty,0)}$}{h1LH}}
\label{app:h1}
In Corollary \ref{cor:h1}, we gave a representation of $h_1$ as an integral in time. We begin this section by computing the aforementioned integral, yielding an explicit formula for $h_1$. Afterwards, we take the limits of $h_1$ that correspond to the limiting strategies discussed in Section \ref{sec:limstrats} i.e. we compute $h_1^{(\infty,\varphi)}$ and $h_1^{(\infty,0)}$. 

We now compute the integrals $I^{(i)}$, which are given in \eqref{def:h1.summands}. We have \noeqref{eq:I1} \noeqref{eq:I2} \noeqref{eq:I3} \noeqref{eq:I4}
\begin{subequations}
\label{eqs:I}
\begin{align}
I^{(1)}\(t\)
	&=\int_t^T \dd s\, s \theta_0^2(s) \Psi_0(t,s)\label{eq:I1} \\
	&= \frac{\ee^{2 \gam  (t+T)} \left(4 \gam ^2 \zeta  \left(T^2-t^2\right)-\zeta^2 (2 \gam  T+1)+2 \gam  T-1\right)}{4 \gam ^2 \left(\ee^{2 \gam  t}-\zeta  \ee^{2 \gam  T}\right)^2} \\ & \hspace{3cm} 
	+ \frac{\ee^{4 \gam  t} (1-2 \gam  t)+\zeta ^2 (2 \gam  t+1) \ee^{4 \gam  T}}{4 \gam ^2 \left(\ee^{2 \gam  t}-\zeta  \ee^{2 \gam  T}\right)^2},
	\\
I^{(2)}\(t\)
	&= \int_t^T\dd s\, \theta_0^2(s) \Psi_0(t,s) \label{eq:I2}\\
	&= -\frac{\ee^{4 \gam  t}+\ee^{2 \gam  (t+T)} \left(\zeta ^2+4 \gam  \zeta  (t-T)-1\right)+\zeta ^2 \left(-\ee^{4 \gam  T}\right)}{2 \gam  \left(\ee^{2 \gam  t}-\zeta  \ee^{2 \gam  T}\right)^2},  \\
I^{(3)} \(t\)
	&= \int_t^T \dd s\, s \theta_0(s) \Psi_0(t,s) \label{eq:I3}\\
	&= \frac{\ee^{4 \gam  t} (1-2 \gam  t)+\zeta ^2 (2 \gam  t+1) \left(-\ee^{4 \gam  T}\right)}{4 \gam ^2 \left(\ee^{2 \gam  t}-\zeta  \ee^{2 \gam  T}\right)^2} 
													 \\ &\hspace{3cm}
	+\frac{\ee^{2 \gam  (t+T)} \left(\zeta ^2+2 \gam  \left(\zeta ^2+1\right) T-1\right)}{4 \gam ^2 \left(\ee^{2 \gam  t}-\zeta  \ee^{2 \gam  T}\right)^2}, \\
I^{(4)}\(t\)
	&= \int_t^T \dd s\, \theta_0(s) \Psi_0(t,s)\label{eq:I4} \\
	&= -\frac{\left(\ee^{2 \gam  t}-\ee^{2 \gam  T}\right) \left(\ee^{2 \gam  t}-\zeta ^2 \ee^{2 \gam  T}\right)}{2 \gam  \left(\ee^{2 \gam  t}-\zeta  \ee^{2 \gam  T}\right)^2}.	 
\end{align}
\end{subequations}
Inserting \eqref{eqs:I} into \eqref{eq:h1} gives an explicit expression for $h_1$. 

Let us now compute $h_1^{(\infty,\varphi)}$. To this end, direct computation yields
\begin{subequations}
\label{eqs:I.limits.1}
\begin{align}
I^{(1)}_{(\infty,\varphi)}\(t\)
	&:= \lim_{\kappa \to \infty} I^{(1)}\(t\) 
	=\frac{e^{2 \gam  (t+T)} \left(4 \gam ^2 \left(T^2-t^2\right)-2\right)}{4 \gam ^2 \left(e^{2 \gam  t}-e^{2 \gam  T}\right)^2} 
	+\frac{e^{4 \gam  t} (1-2 \gam  t)+(2 \gam  t+1) e^{4 \gam  T}}{4 \gam ^2 \left(e^{2 \gam  t}-e^{2 \gam  T}\right)^2}, \label{eq:I1L}\\
I^{(2)}_{(\infty,\varphi)}\(t\)
	&:= \lim_{\kappa \to \infty} I^{(2)}\(t\) 
	= \frac{-e^{4 \gam  t}-4 \gam  (t-T) e^{2 \gam  (t+T)}+e^{4 \gam  T}}{2 \gam  \left(e^{2 \gam  t}-e^{2 \gam  T}\right)^2}, \label{eq:I2L} \\
I^{(3)}_{(\infty,\varphi)}\(t\)
	&:= \lim_{\kappa \to \infty} I^{(3)} \(t\) 
	=\frac{e^{4 \gam  t} (1-2 \gam  t)+4 \gam  T e^{2 \gam  (t+T)}-(2 \gam  t+1) e^{4 \gam  T}}{4 \gam ^2 \left(e^{2 \gam  t}-e^{2 \gam  T}\right)^2}, \label{eq:I3L} \\
I^{(4)}_{(\infty,\varphi)}\(t\)
	&:= \lim_{\kappa \to \infty} I^{(4)} \(t\)
	= - \frac 1{2\gam}. \label{eq:I4L}
\noeqref{eq:I1L}\noeqref{eq:I2L}\noeqref{eq:I3L}\noeqref{eq:I4L}
\end{align}
\end{subequations}
We now compute $h_1^{(\infty,0)}$. By direct computation, we obtain \noeqref{eq:I1LH}\noeqref{eq:I2LH}\noeqref{eq:I3LH}\noeqref{eq:I4LH}
\begin{subequations}
\label{eqs:I.limits.2}
\begin{align}
\lim_{\varphi \to 0} c^{(1)}(t,a,b) I^{(1)}_{(\infty,\varphi)} (t)
    &= - \frac{ f'(\ab) \mu_0(T+t)}{2 (T-t)},                       \label{eq:I1LH}\\
\lim_{\varphi \to 0} c^{(2)}(t,a,b) I^{(2)}_{(\infty,\varphi)} (t) 
    &= - \frac{f'(\ab)(a - \ab - t \mu_0) }{T-t},                   \label{eq:I2LH} \\
\lim_{\varphi \to 0} c^{(3)}(t,a,b) I^{(3)}_{(\infty,\varphi)}(t) 
    &= - \frac 16 g'(\bb) \eta_0 (T + 2t),                          \label{eq:I3LH}\\
\lim_{\varphi \to 0} c^{(4)}(t,a,b) I^{(4)}_{(\infty,\varphi)} (t) 
    &= - \frac 12 g'(\bb) (b - \bb - t \eta_0),                     \label{eq:I4LH}
\end{align}
\end{subequations}
where the $c^{(i)}$ are given in \eqref{def:h1.summands} and $I^{(i)}_{(\infty,\varphi)}$ are given in \eqref{eqs:I.limits.1}.  Summing the terms in \eqref{eqs:I.limits.2} yields $h_1^{(\infty,0)}$, which is given in \eqref{eq:hLH}.

%
%

\clearpage
\section{Figures}

%
%

\begin{figure}[h]
\centering
\label{fig:ex1}
    \begin{subfigure}[b]{0.47\textwidth}
        \includegraphics[width=\textwidth]{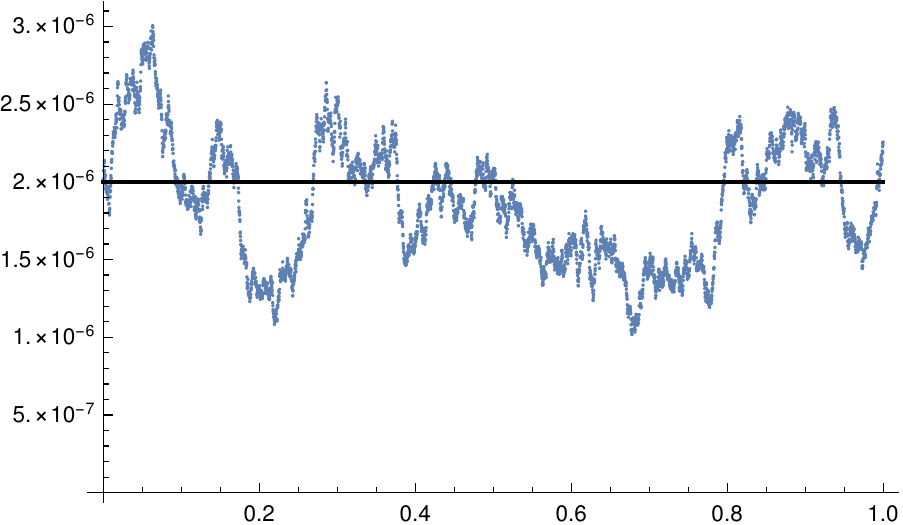}
        \caption{}
        \label{fig:tempi}
    \end{subfigure}
    \hfill
    \begin{subfigure}[b]{0.47\textwidth}
        \includegraphics[width=\textwidth]{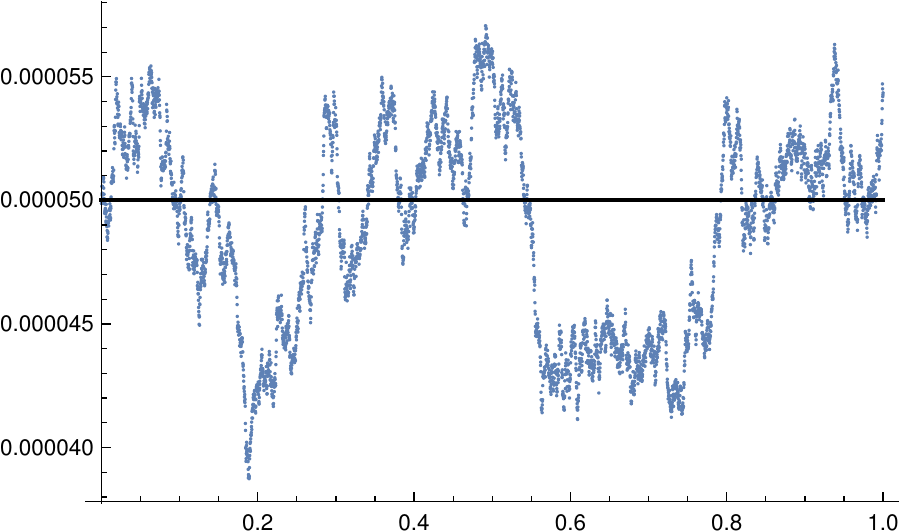}
        \caption{}
        \label{fig:permi}
    \end{subfigure}

    \begin{subfigure}[b]{0.47\textwidth}
        \includegraphics[width=\textwidth]{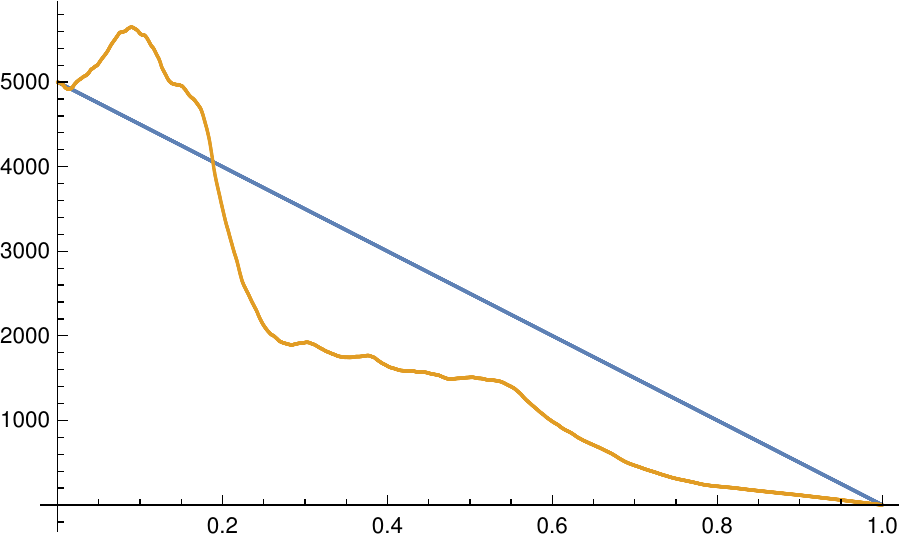}
        \caption{}
        \label{fig:invi}
    \end{subfigure}
    \hfill
    \begin{subfigure}[b]{0.47\textwidth}
        \includegraphics[width=\textwidth]{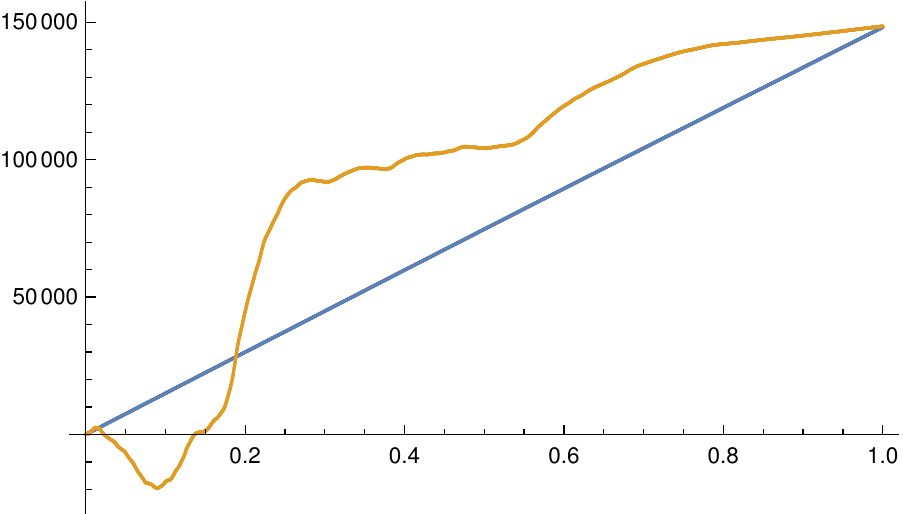}
        \caption{}
        \label{fig:cashi}
    \end{subfigure}
    \caption*{}
\vspace{-1.1cm}
\end{figure}

\begin{wrapfigure}{l}{0.53\textwidth}
	\addtocounter{figure}{-1}
	\refstepcounter{figure}
	\label{fig:ex1.2}
	\addtocounter{figure}{-1}
    \begin{subfigure}{.47\textwidth}
        \addtocounter{subfigure}{4}
        \includegraphics[width=\textwidth]{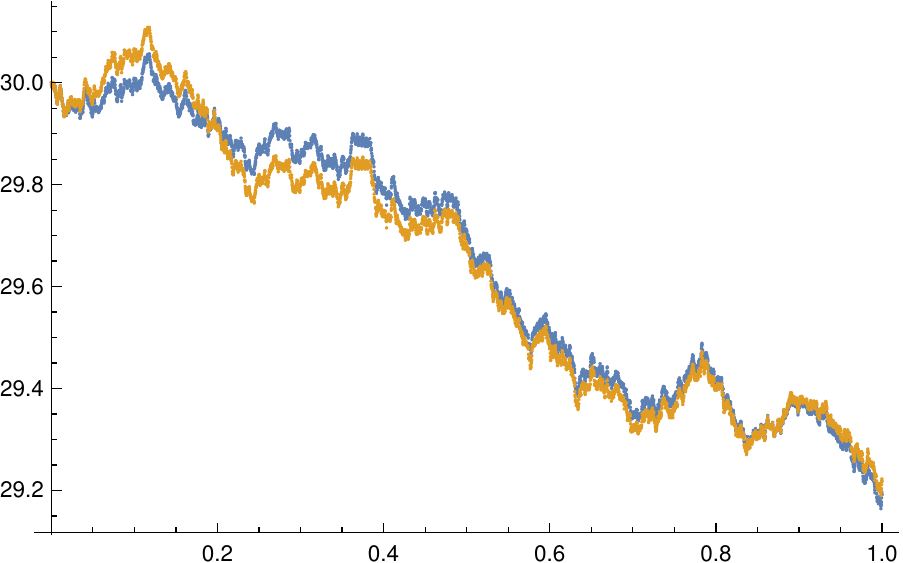}
        \caption{}
        \label{fig:stocki}
    \end{subfigure}
    \vspace{-1.1cm}
\end{wrapfigure}
\noindent
Figure \thefigure: Here we plot a single sample path of $(a,b)$ and the paths of $(X^\nu,S^\nu,Q^\nu)$ that result from following $\nu = \nuba 0^{(\infty,0)}$ (\blu{blue}) and $\nu = \nuba 1^{(\infty,0)}$ (\ora{orange}) with dynamics \eqref{eq:impact.dynamics.ex1} and parameters \eqref{eq:ex1.param.values}. In Figure \ref{fig:tempi}, we plot the temporary price impact $a$, and in Figure \ref{fig:permi} we plot the permanent price impact $b$. We plot the trader's inventory $Q^\nu$ in Figure \ref{fig:invi} and the trader's cash $X^\nu$ in Figure \ref{fig:cashi}. In Figure \ref{fig:stocki}, we plot the stock prices $S^\nu$.

%
%

\begin{figure}[h]
    \centering
    \begin{subfigure}[b]{0.47\textwidth}
        \includegraphics[width=\textwidth]{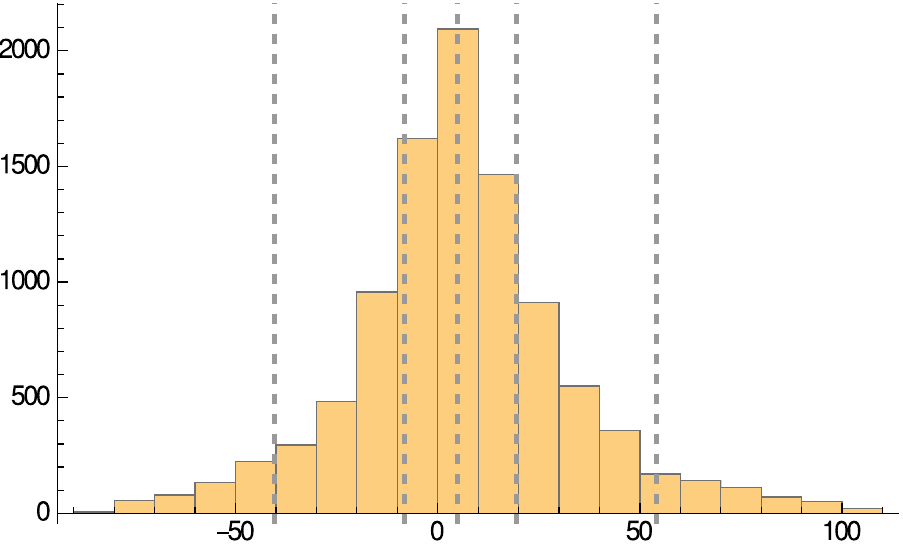}
         \caption{\label{fig:trim-centered-zero-over-ac}}
    \end{subfigure}
    \hfill
    \begin{subfigure}[b]{0.47\textwidth}
        \includegraphics[width=\textwidth]{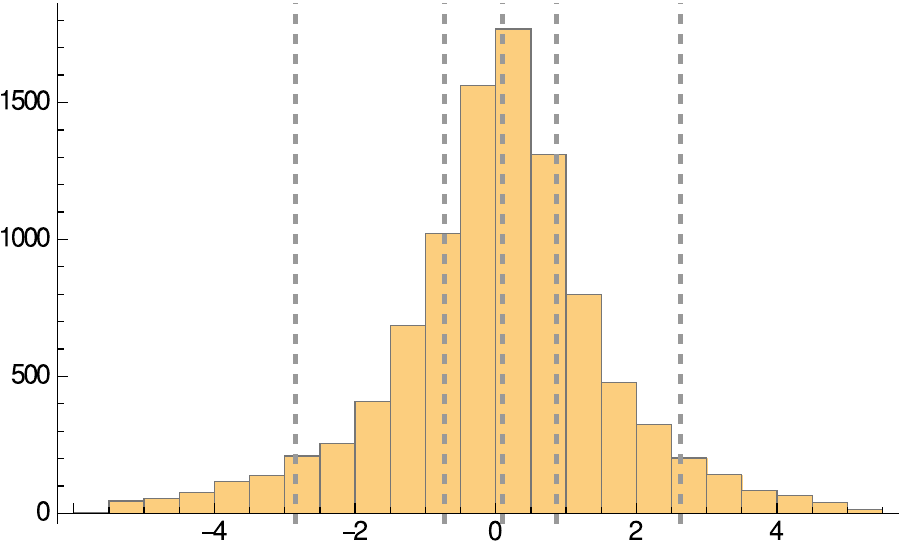}
         \caption{\label{fig:trim-centered-first-over-zero}}
    \end{subfigure}
    \caption{Here we plot histograms of the relative performance criteria given in \eqref{eq:rel.perf} with the initial conditions \eqref{eq:init.cond}. In Figure \ref{fig:trim-centered-zero-over-ac} we plot the performance of $\nuba 0$ relative to $\nu_{AC}$. In Figure \ref{fig:trim-centered-first-over-zero}, we plot performance of $\nuba 1$ relative to $\nuba 0$. The vertical, dashed lines represent the 5\%, 25\%, 50\%, 75\%, and 95\% quantiles, respectively.}
    \label{fig:trim-centered-base}
\end{figure}

\begin{figure}[h]
    \centering
    \begin{subfigure}[b]{0.47\textwidth}
        \includegraphics[width=\textwidth]{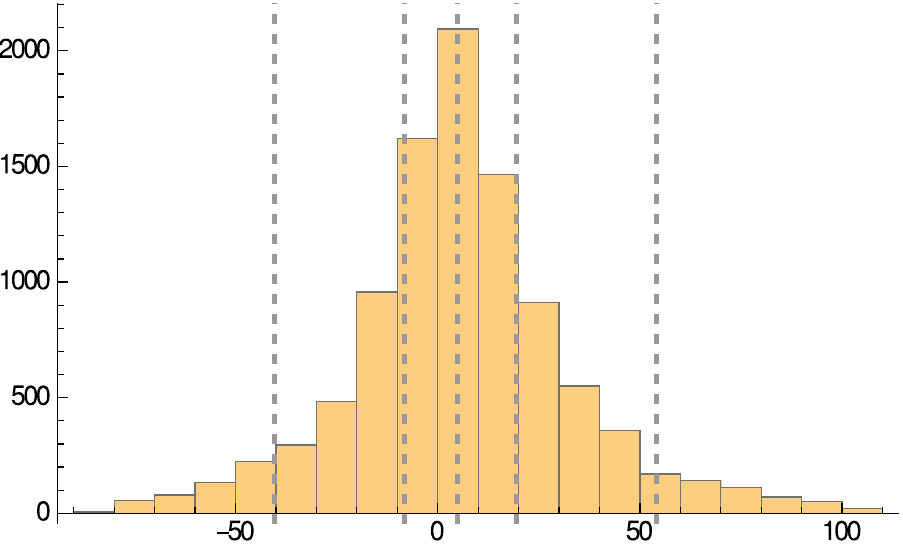}
         \caption{}
         \label{fig:trim-centered-zero-over-ac-L}
    \end{subfigure}
    \hfill
    \begin{subfigure}[b]{0.47\textwidth}
        \includegraphics[width=\textwidth]{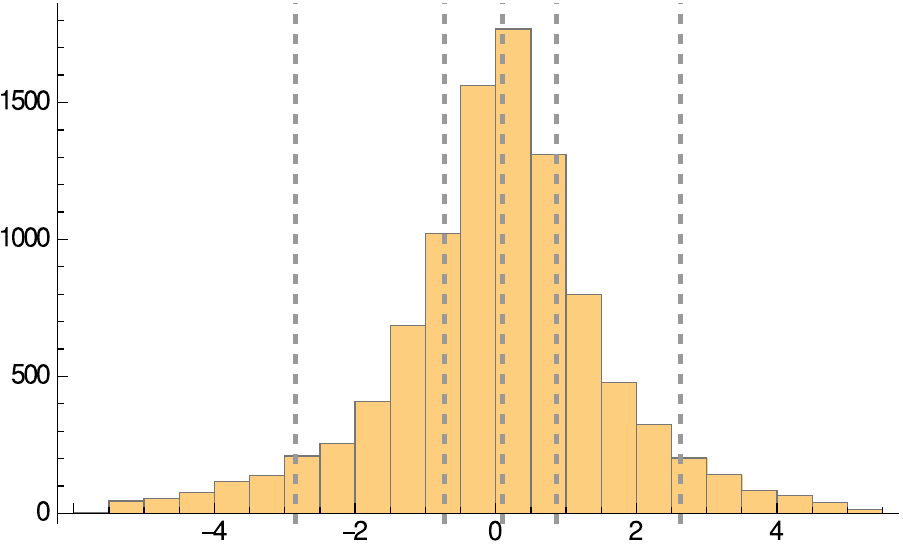}
         \caption{}
         \label{fig:trim-centered-first-over-zero-L}
    \end{subfigure}
    \caption{Here we plot histograms of the relative performance criteria given in \eqref{eqs:rel.perf.L} with the initial conditions \eqref{eq:init.cond}. In Figure \ref{fig:trim-centered-zero-over-ac-L}, we plot the performance \eqref{eq:rel.perf.L.0} of $\nuba 0^{(\infty,\varphi)}$ relative to $\nu_{AC}^{(\infty,\varphi)}$. In Figure \ref{fig:trim-centered-first-over-zero-L}, we plot performance \eqref{eq:rel.perf.L.1} of $\nuba 1^{(\infty,\varphi)}$ relative to $\nuba 0^{(\infty,\varphi)}$. The vertical, dashed lines represent the 5\%, 25\%, 50\%, 75\%, and 95\% quantiles, respectively.}
    \label{fig:trim-centered-L}
\end{figure}

\begin{figure}[ht]
    \begin{minipage}[t]{.45\textwidth}
        \centering
        \includegraphics[width=\textwidth]{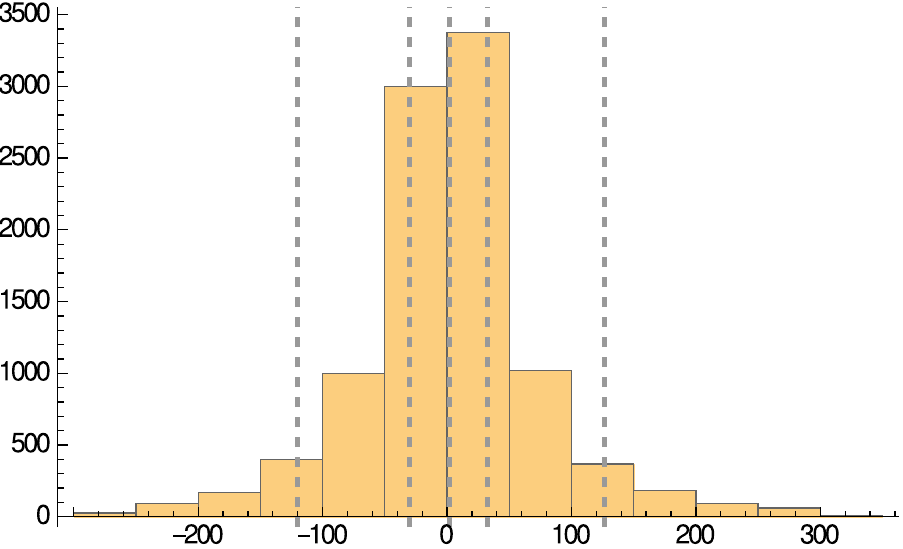}
        \caption{Here we plot the performance of $\nuba 1^{(\infty,0)}$ relative to $\nuba 0^{(\infty,0)}$ with respect to the performance criteria \eqref{eq:rel.perf.LH} with initial conditions \eqref{eq:init.cond}.  The vertical, dashed lines represent the 5\%, 25\%, 50\%, 75\%, and 95\% quantiles, respectively. }
		\label{fig:trim-centered-first-over-zero-LP}
    \end{minipage}
    \hfill
    \begin{minipage}[t]{.45\textwidth}
        \centering
        \includegraphics[width=\textwidth]{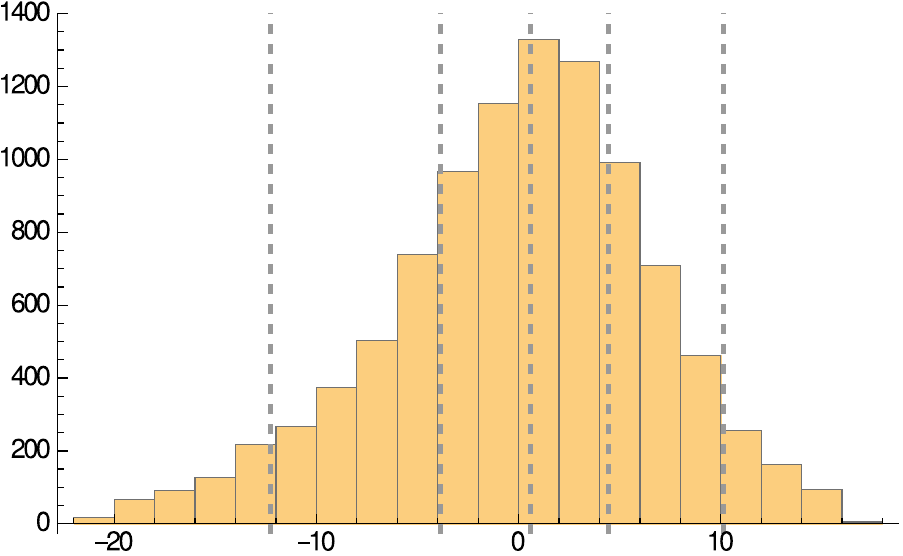}
        \caption{Here we plot the performance of $\nuba 1$ relative to $\nuba{0}$ with respect to the performance criteria \eqref{eq:rel.perf} (right) with initial conditions \eqref{eq:init.cond.above}. The vertical, dashed lines represent the 5\%, 25\%, 50\%, 75\%, and 95\% quantiles, respectively. }
        \label{fig:trim-above-first-over-zero}
    \end{minipage}
\end{figure}

\begin{figure}[ht]
    \begin{minipage}[t]{.45\textwidth}
        \centering
        \includegraphics[width=\textwidth]{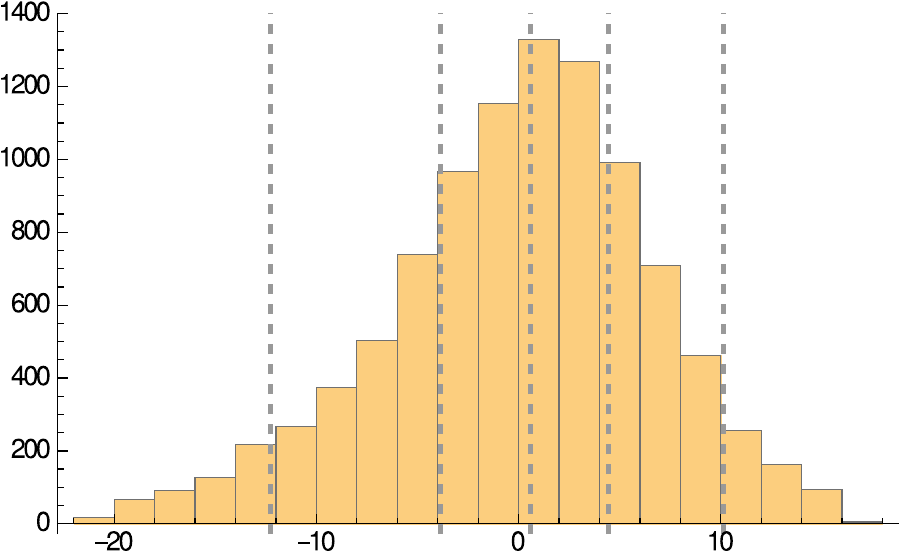}
        \caption{Here we plot the performance of $\nuba 1^{(\infty,\varphi)}$ relative to $\nuba{0}^{(\infty,\varphi)}$ with respect to the performance criteria \eqref{eq:rel.perf.L.1} with initial conditions \eqref{eq:init.cond.above}. The vertical, dashed lines represent the 5\%, 25\%, 50\%, 75\%, and 95\% quantiles, respectively. }
        \label{fig:trim-above-first-over-zero-L}
    \end{minipage}
    \hfill
    \begin{minipage}[t]{.45\textwidth}
        \centering
        \includegraphics[width=\textwidth]{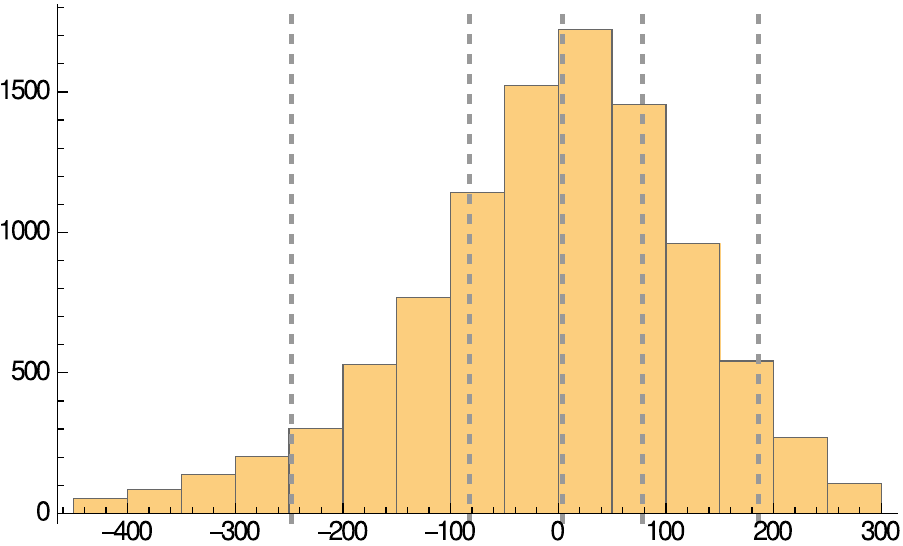}
        \caption{Here we plot the performance of $\nuba 1^{(\infty,0)}$ relative to $\nuba{0}^{(\infty,0)}$ with respect to the performance criteria \eqref{eq:rel.perf.LH} with initial conditions \eqref{eq:init.cond.above}. The vertical, dashed lines represent the 5\%, 25\%, 50\%, 75\%, and 95\% quantiles, respectively. }
        \label{fig:trim-above-first-over-zero-LH}
    \end{minipage}
\end{figure}

\end{document}